\newcommand{\ket}[1]{\vert#1\rangle}
\newcommand{\bra}[1]{\langle#1\vert}
\newcommand{\ketbra}[2]{\vert #1 \rangle \langle #2 \vert}
\newcommand{\id}{\mathbb{1}}
\newcommand{\set}[1]{\mathcal{#1}}
\newcommand{\Lcal}{\mathcal{L}}
\newcommand{\Hcal}{\mathcal{H}}
\newcommand{\Ecal}{\mathcal{E}}
\newcommand{\Tcal}{\mathcal{T}}
\newcommand{\Scal}{\mathcal{S}}
\newcommand{\Wcal}{\mathcal{W}}
\newcommand{\tr}{\text{\normalfont Tr}}
\newcommand{\beq}{\begin{equation}}
\newcommand{\eeq}{\end{equation}}
\newtheorem{theorem}{Theorem}
\newtheorem*{theorem*}{Theorem}
\newtheorem{definition}{Definition}
\newcommand{\iqoqi}{Institute for Quantum Optics and Quantum Information (IQOQI), Austrian Academy of Sciences, Boltzmanngasse 3, A-1090 Vienna, Austria}
\newcommand{\univie}{Vienna Center for Quantum Science and Technology (VCQ), Faculty of Physics, University of Vienna, Boltzmanngasse 5, A-1090 Vienna, Austria}
\newcommand{\todai}{Department of Physics, Graduate School of Science, The University of Tokyo, Hongo 7-3-1, Bunkyo-ku, Tokyo 113-0033, Japan}
\newcommand{\map}[1]{\widetilde{#1}} 
\renewcommand{\H}{\mathcal{H}}
\begin{document}

\title{Strict hierarchy between parallel, sequential, and indefinite-causal-order strategies for channel discrimination} 

\author{Jessica Bavaresco}
\email{jessica.bavaresco@oeaw.ac.at}
\affiliation{\iqoqi}

\author{Mio Murao}
\affiliation{\todai}
\affiliation{Trans-scale Quantum Science Institute, The University of Tokyo, Hongo 7-3-1, Bunkyo-ku, Tokyo 113-0033, Japan}

\author{Marco T\'ulio Quintino}
\affiliation{\iqoqi}
\affiliation{\univie}
\affiliation{\todai}

\date{November 18, 2021}

\begin{abstract}
We present an instance of a task of minimum-error discrimination of two qubit-qubit quantum channels for which a sequential strategy outperforms any parallel strategy. We then establish two new classes of strategies for channel discrimination that involve indefinite causal order and show that there exists a strict hierarchy among the performance of all four strategies. Our proof technique employs a general method of computer-assisted proofs. We also provide a systematic method for finding pairs of channels that showcase this phenomenon, demonstrating that the hierarchy between the strategies is not exclusive to our main example.
\end{abstract}

\maketitle

The discrimination of physical operations is a task related to the elementary ability to experimentally distinguish among different dynamics, or time evolutions, to which physical systems are subjected. From a fundamental perspective, the capacity to test and discriminate between different hypothesis lies at the core of statistical analysis and constitutes one of the pillars of the scientific method. 
From a more practical standpoint, the discrimination of physical operations comes into play in problems such as the identification of cause-effect relations~\cite{chiribella19}, computational complexity analysis of, e.g., oracle-based algorithms~\cite{deutsch92,grover98,chefles07,reitzner14}, certification of circuit elements~\cite{chiribella08-3,skotiniotis18,pereira20}, and arises naturally in tasks related to metrology~\cite{giovannetti20}. 

Within the context of quantum physics, pioneering work connecting hypothesis testing with estimation and discrimination of quantum objects dates back to Holevo and Helstrom~\cite{holevo73,helstrom69}. Most of these initial results concerned the discrimination of quantum states, while developing the concepts and methods to analyse the fundamental problem of discriminating between quantum operations---a task also referred to as \textit{quantum channel discrimination}. A plethora of interesting results on this topic has since been demonstrated~\cite{kitaev97,aharonov98,acin01,dariano01,duan07,ziman08,duan09,duan16,becker20}.  

In channel discrimination tasks that allow multiple interactions with the channels of interest, different discrimination strategies become relevant, the most common being parallel and sequential (i.e. adaptive) strategies. When considering pairs of unitary channels, optimal minimum-error discrimination has been shown to be achieved by parallel schemes~\cite{chiribella08-1}. The advantage of sequential strategies first became apparent in Ref.~\cite{harrow10}, for a task regarding two qubit-ququart entanglement-breaking channels, which are channels that cannot transmit quantum information~\cite{horodecki03}. Recent results also indicate this advantage~\cite{pirandola19,zhuang20,pereira20,rexiti20}, including its numerical observation in a discrimination task of two qubit-qubit generalized amplitude-damping channels~\cite{katariya20}.

In a related task of the discriminating of two nonsignaling bipartite channels, a more general strategy constructed from the quantum switch~\cite{chiribella13}, involving indefinite causal order, provided an advantage over causal (sequential and parallel) strategies, even allowing for perfect discrimination~\cite{chiribella12}. This phenomenon hints that indefinite causal order could be useful for the task of channel discrimination, similarly to how it has proven to be advantageous for other tasks, such as quantum computation~\cite{araujo14}, communication complexity~\cite{feix15,guerin16}, and the inversion of unknown unitary operations~\cite{quintino18}.

In this Letter, we have two main contributions to the study of channel discrimination. The first is the rigorous demonstration of an example of the advantage of sequential over parallel strategies. Our example concerns the simplest scenario of a task of channel discrimination---between a pair of qubit-qubit channels using two copies---and channels with non-zero quantum capacity---an amplitude-damping and a bit-flip channel. The second is the demonstration that strategies that involve indefinite causal order can outperform parallel and sequential strategies for the same task of channel discrimination. In order to do so, we define two new classes of
discrimination strategies that make use of indefinite causal order---which we call separable and general. Together, these results constitute a strict hierarchy between four different strategies of channel discrimination. To demonstrate our results, we develop and apply a general method of computer-assisted proofs.

The task of minimum-error channel discrimination works as follows: With probability $p_i$, Alice is given an unknown quantum channel $\widetilde{C}_i:\Lcal(\Hcal^I)\to\Lcal(\Hcal^O)$, drawn from an ensemble $\Ecal=\{p_j,\widetilde{C}_j\}_{j=1}^N$ that is known to her. Being allowed to use a finite number of copies of the channel $\widetilde{C}_i$, her task is to determine which channel she received, by performing operations on this channel and guessing the value of $i\in\{1,\ldots,N\}$. This problem is equivalent to Alice extracting the ``classical information'' $i$ which is encoded in the channel $\widetilde{C}_i$. In the simplest case of this task, when Alice is allowed to use one copy of the channel she received, the most general quantum operations that Alice could apply in her laboratory are to send part of a potentially entangled state $\rho\in\Lcal(\Hcal^I\otimes\Hcal^\text{aux})$ through the channel $\widetilde{C}_i$, and jointly measure the output with a positive operator-valued measure (POVM) $M=\{M_a\}, M_a\in\Lcal(\Hcal^O\otimes\Hcal^\text{aux})$, announcing the outcome of her measurement as her guess. Then, her probability of correctly guessing the value of $i$ is given by $p_\text{succ} \coloneqq \sum_{i=1}^N p_i \tr\left[(\widetilde{C}_i\otimes\widetilde{\id})(\rho)\,M_i\right]$, where $\widetilde{\id}$ is the identity map on $\Lcal(\Hcal^\text{aux})$. Alice can improve her chances by optimizing over the operations she applies on the unknown channel based on her knowledge of the ensemble. Her maximal probability of success is then given by $p^*_\text{succ} \coloneqq \max_{\{\rho,M\}} p_\text{succ}$.

By means of the Choi-Jamio\l{}kowski isomorphism (see footnote~\cite{fn::cj})~\cite{depillis67,jamiolkowski72,choi75} we can represent a quantum channel $\widetilde{C}$ (i.e. a completely positive, trace-preserving map) as a positive semidefinite operator $C\in\Lcal(\Hcal^I\otimes\Hcal^O)$, called its ``Choi operator'', that satisfies $C\geq0$ and $\tr_O C = \id^I$, where $\tr_X$ denotes the partial trace over $\Hcal^X$ and $\id^X$ the identity operator on $\Hcal^X$. Using Choi operators and the link product (see footnote~\cite{fn::linkprod})~\cite{chiribella09} to represent a concatenation of operators, we can rewrite the maximal probability of successful discrimination as $p^*_\text{succ}= \max_{\{\rho,M\}}\sum_{i=1}^N p_i \, C_i*\rho*M_i^T$.

In principle, Alice could apply a more general strategy by constructing the most general map that takes a quantum channel to a set of probability distributions. This map is defined by the most general set of operators $T=\{T_i\}_{i=1}^N, T_i\in\Lcal(\Hcal^I\otimes\Hcal^O)$ that respect the relation $p(i|C)=\tr(C\,T_i)$ for all Choi states of channels $C$, where $\{p(i|C)\}$ is a probability distribution. This set of operators has been characterized as a \textit{general tester}, a set $T=\{T_i\}$ that satisfies $T_i\geq0\,\forall i$ and $\sum_iT_i=\sigma\otimes\id^O$, where $\sigma\in\Lcal(\Hcal^I)$ is a quantum state~\cite{ziman08,chiribella09} (see also Appendix~\ref{app::gentesters}). Remarkably, it has been shown that every general tester has a \textit{quantum realization} in terms of states and measurements. Namely, for any strategy given by a general tester, a state and measurement that are able to implement it can always be constructed, in such a way that each tester element can be recovered as $T_i=\rho*M_i^T$.  This mathematical equivalence allows for a simpler characterization of Alice's strategies, who can now optimize over general testers $T$ to achieve a maximal probability of successful discrimination that is equivalently given by $p^*_\text{succ} = \max_{\{T\}} \sum_{i=1}^N p_i \tr\left(T_i\,C_i\right)$ (see footnote~\cite{fn::transpose}).

Now let us analyze the more interesting case in which Alice has access to two copies of the channel $C_i$. With two copies, Alice has the freedom of choosing how to concatenate these channels in order to gain more information about them.

The first and simplest option is to apply the two copies of the unknown channel in parallel, by sending a joint state $\rho\in\Lcal(\Hcal^{I_1}\otimes\Hcal^{I_2}\otimes\Hcal^\text{aux})$ through both copies of $C_i$ and then measuring the output with a POVM $M=\{M_i\}, M_i\in\Lcal(\Hcal^{O_1}\otimes\Hcal^{O_2}\otimes\Hcal^\text{aux})$, where $\Hcal^{I_1}$($\Hcal^{I_2}$) represents the input space of the first (second) copy of $C_i$, and equivalently for the output spaces. Just like in the one-copy case, this strategy can be expressed by a two-copy parallel tester, a set of operators $T^\text{PAR}=\{T_i^\text{PAR}\}$, that satisfy a number of linear constraints defined below, and that always accept a quantum realization in terms of states and measurements, according to $T^\text{PAR}_i = \rho*M_i^T$~\cite{chiribella09} (see Fig.~\ref{fig::realization}(a)). In the following we use the notation $_X A\coloneqq\tr_X A\otimes\frac{\id^X}{d_X}$ and $d_X=\text{dim}(\set{H}^X)$.

\begin{definition}[Two-copy Parallel Tester]
A parallel tester is a set of linear operators $T^\text{PAR}=\{T_i^\text{PAR}\}_{i=1}^N, T^\text{PAR}_i\in\Lcal(\Hcal^{I_1O_1I_2O_2})$ such that $T^\text{PAR}_i\geq0,\,\forall i$ and $W^\text{PAR}:=\sum_i T^\text{PAR}_i$ satisfies $\tr(W^\text{PAR})=d_{O_1}d_{O_2}$ and
\begin{equation}
    W^\text{PAR} = _{O_1O_2}W^\text{PAR}.
\end{equation}
$W^\text{PAR}$ is called a parallel process.
\end{definition}

More generally, Alice could use her two copies of $C_i$ in a sequential manner, first sending a state $\rho\in\Lcal(\Hcal^{I_1}\otimes\Hcal^{\text{aux}_1})$ through the first copy of $C_i$, next applying to the output a general channel $\widetilde{E}:\Lcal(\Hcal^{O_1}\otimes\Hcal^{\text{aux}_1})\to\Lcal(\Hcal^{I_2}\otimes\Hcal^{\text{aux}_2})$, then sending part of the output of channel $\widetilde{E}$ through the second copy of $C_i$, and finally measuring the output with a POVM $M=\{M_i\}, M_i\in\Lcal(\Hcal^{O_2}\otimes\Hcal^{\text{aux}_2})$. Analogously to the parallel case, the tester associated to this strategy---a sequential tester $T^\text{SEQ}=\{T^\text{SEQ}_i\}$ which can be expressed as $T^\text{SEQ}_i = \rho*E*M_i^T$, where $E\in\Lcal(\Hcal^{O_1}\otimes\Hcal^{\text{aux}_1}\otimes\Hcal^{I_2}\otimes\Hcal^{\text{aux}_2})$ is the Choi operator of map $\widetilde{E}$, meaning it can always be realized by quantum circuit~\cite{chiribella09} (see Fig.~\ref{fig::realization}(b))---has been characterized as

\begin{definition}[Two-copy Sequential Tester]
A sequential tester is a set of linear operators $T^\text{SEQ}=\{T_i^\text{SEQ}\}_{i=1}^N, T^\text{SEQ}_i\in\Lcal(\Hcal^{I_1O_1I_2O_2})$ such that $T^\text{SEQ}_i\geq0,\,\forall i$ and $W^\text{SEQ}:=\sum_i T^\text{SEQ}_i$ satisfies $\tr(W^\text{SEQ})=d_{O_1}d_{O_2}$ and
\begin{align}
    W^\text{SEQ} &= _{O_2}W^\text{SEQ} \\
    _{I_2O_2}W^\text{SEQ} &= _{O_1I_2O_2}W^\text{SEQ}.
\end{align}
$W^\text{SEQ}$ is called a sequential process.
\end{definition}

Parallel and sequential strategies have long been regarded as the most general strategies for channel discrimination. 
We now propose a more general strategy for channel discrimination than the sequential one, that arises from the following reasoning: In the same fashion of the definition of the general one-copy tester, we may define a \textit{general two-copy tester} as the most general set of operators $T^\text{GEN}=\{T_i^\text{GEN}\}$ that map a pair of quantum channels, represented by their Choi operators $C_A\in\Lcal(\Hcal^{I_1}\otimes\Hcal^{O_1})$ and $C_B\in\Lcal(\Hcal^{I_2}\otimes\Hcal^{O_2})$, to a valid probability distribution according to $p(i|C_A,C_B)=\tr[(C_A\otimes C_B)T_i^\text{GEN}]$. It is shown in Appendix~\ref{app::gentesters} that this definition is equivalent to

\begin{definition}[Two-copy General Tester]
A general tester is a set of linear operators $T^\text{GEN}=\{T_i^\text{GEN}\}_{i=1}^N, T^\text{GEN}_i\in\Lcal(\Hcal^{I_1O_1I_2O_2})$ such that $T^\text{GEN}_i\geq0,\,\forall i$ and $W^\text{GEN}:=\sum_i T^\text{GEN}_i$ satisfies $\tr(W^\text{GEN})=d_{O_1}d_{O_2}$ and
\begin{align}
    _{I_1O_1}W^\text{GEN} &= _{I_1O_1O_2}W^\text{GEN} \\
    _{I_2O_2}W^\text{GEN} &= _{O_1I_2O_2}W^\text{GEN} \\
    W^\text{GEN} = _{O_1}W^\text{GEN} &+ _{O_2}W^\text{GEN} - _{O_1O_2}W^\text{GEN}.
\end{align}
$W^\text{GEN}$ is called a general process.
\end{definition}
%
A general tester can be seen as the most general transformation that acts globally on a pair of independent channels, extracting probability distributions from it. In this sense, a general tester is to a pair of channels as a POVM is to a pair of states, and it can be analogously interpreted as a ``global measurement'' of a pair of channels. Differently from parallel and sequential testers, the definition of general testers does not take into account the order in which the channels may be acted upon.

Both parallel and sequential processes are particular cases of general processes (see Fig.~\ref{fig::sets}(b)). Nevertheless, the formalism of process matrices has shown that there are general processes that do not respect a definite causal order~\cite{oreshkov12,araujo15} -- which is defined as the ability of a process to be described as a parallel, sequential, or as a classical mixture of sequential processes, called ``causally separable'' process matrices, motivating the definition of our final class of testers:

\begin{definition}[Two-copy Separable Tester]
A separable tester is a set of linear operators $T^\text{SEP}=\{T_i^\text{SEP}\}_{i=1}^N$, $T^\text{SEP}_i\in\Lcal(\Hcal^{I_1O_1I_2O_2})$ such that $T^\text{SEP}_i\geq0,\,\forall i$ and $W^\text{SEP}:=\sum_i T^\text{SEP}_i$ satisfies $\tr(W^\text{SEP})=d_{O_1}d_{O_2}$ and
\begin{equation}
    W^\text{SEP} = q\,W^{1\prec2} + (1-q)W^{2\prec1},
\end{equation}
where $0\leq q\leq 1$ and $W^{1\prec2(2\prec1)}$ is a sequential process with slot $1(2)$ coming before slot $2(1)$. $W^\text{SEP}$ is called a separable process.
\end{definition}

Notice that our characterization is equivalent to imposing that $W^\text{SEQ}$, $W^\text{GEN}$, and $W^\text{SEP}$ are ordered, general, and causally separable process matrices, respectively~\cite{oreshkov12,araujo15}. In our terminology, the set of separable processes is the convex hull of the set of sequential processes whose slots follow the order $1\prec 2$ and $2\prec 1$, while parallel processes are the ones at the intersection of these two sets 
(see Fig.~\ref{fig::sets}(b)).

\begin{figure}
\begin{center}
	\includegraphics[width=\columnwidth]{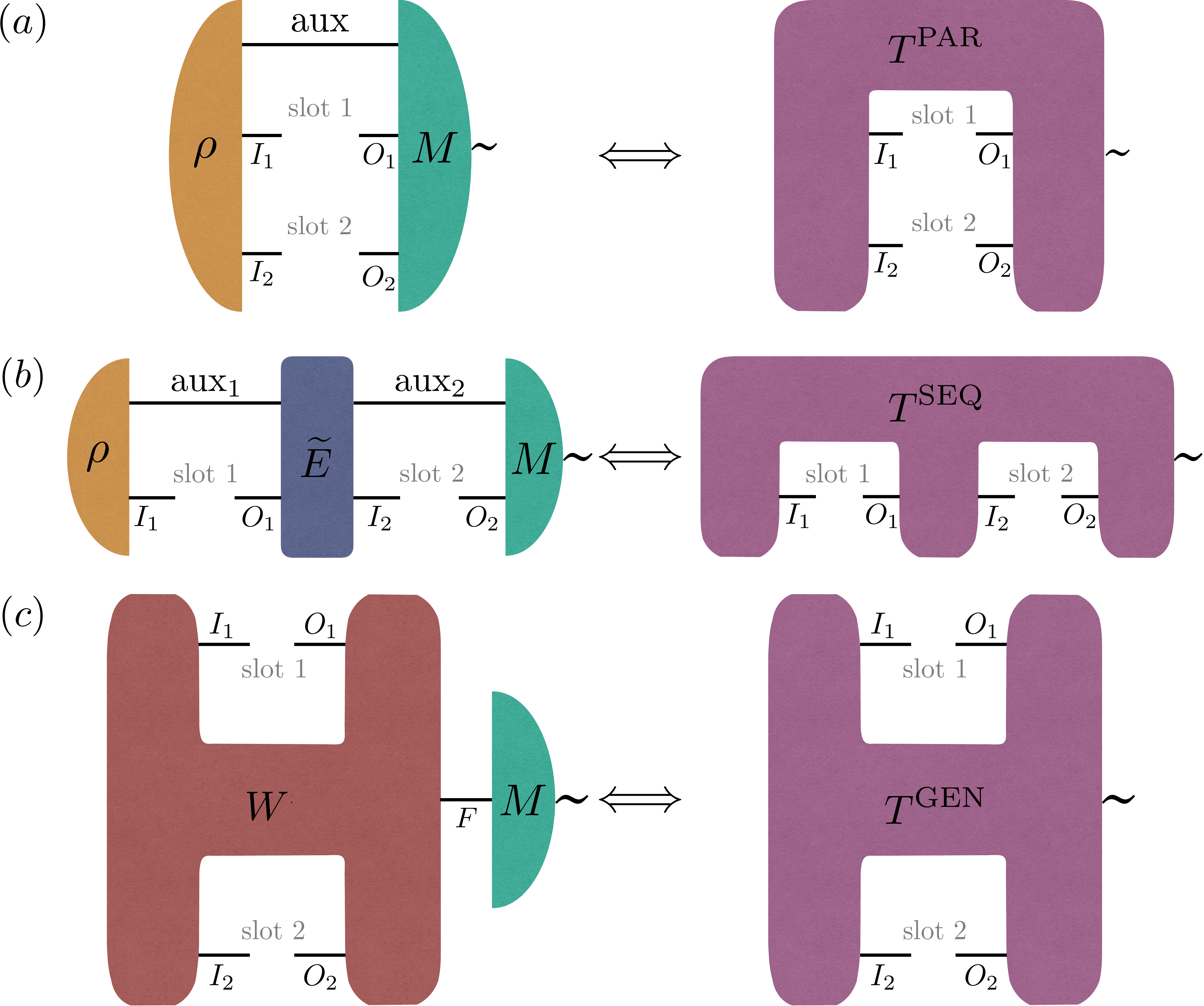}
	\caption{Schematic representation of the realization of every two-copy (a) parallel tester $T^\text{PAR}$ with a state $\rho$ and a POVM $M$, (b) sequential tester $T^\text{SEQ}$ with a state $\rho$, a channel $\widetilde{E}$, and a POVM $M$, and (c) general tester $T^\text{GEN}$ with a process matrix $W$ and a POVM $M$.}
\label{fig::realization}
\end{center}
\end{figure}

The definition of separable processes was conceived from the idea that one could plug two different channels $C_A$ and $C_B$ in the two slots of process $W^\text{SEP}$, which would then represent a mixture of a process that applies channel $C_A$ before channel $C_B$ with one that applies channel $C_B$ before $C_A$. One could then expect that this classical mixture of causal orders should not be relevant for the problem in which the two channels being plugged into the separable tester are identical: two copies of $C_i$. Nonetheless, we show that separable testers indeed provide an advantage over sequential testers, which hints at a more complicated structure of separable \textit{testers} than of separable \textit{processes} themselves. 
This advantage implies that separable testers cannot be simply realized by ordered circuits and classical randomness, and that the set of separable testers is strictly larger than the convex hull of the set of sequential testers that are ordered in different directions (see Fig.~\ref{fig::sets}(a)).

With our constructed unified framework for channel discrimination at hand, we can now define the maximal probability of successful discrimination under each of the four described strategies by allowing Alice to optimize over different classes of testers. The maximal probability of successful discrimination of a channel ensemble $\Ecal=\{p_i,C_i\}$ using two copies under strategy $\Scal\in\{\text{PAR},\text{SEQ},\text{SEP},\text{GEN}\}$ then reads
\begin{equation}
    P^\Scal\coloneqq \max_{\{T^\Scal\}} \sum_{i=1}^N p_i \tr\left(T^\Scal_i\,C_i^{\otimes2}\right).
\end{equation}
It is clear that these four strategies---parallel, sequential, separable, and general---form a hierarchy since the set of testers that they define is a superset of the previous one, in this exact order, implying the relation $P^\text{PAR}\leq P^\text{SEQ}\leq P^\text{SEP}\leq P^\text{GEN}$ for any fixed ensemble. We show that, in fact, all these three inequalities can be simultaneously strictly satisfied.

To compute the values of $P^\Scal$, we phrase the optimization problems that define it in terms of semidefinite programming (SDP). Essentially,
\begin{flalign}\label{sdp::primal}
\begin{aligned}
    \textbf{given}\ \      &\{p_i,C_i\} \\
    \textbf{maximize}\ \   &\sum_i p_i \tr\left(T^\Scal_i\,C_i^{\otimes 2}\right) \\ 
    \textbf{subject to}\ \ &\{T^\Scal_i\}\ \text{is a tester with strategy}\ \Scal.
\end{aligned}&&
\end{flalign}

This problem can be equivalently solved by its dual problem:
\begin{flalign}\label{sdp::dual}
\begin{aligned}
    \textbf{given}\ \ &\{p_i,C_i\} \\
    \textbf{minimize}\ \   &\lambda \\ 
    \textbf{subject to}\ \  &p_i\,C_i^{\otimes2}\leq \lambda\,\overline{W}^\Scal \ \ \forall\,i,
\end{aligned}&&
\end{flalign}
where $\overline{W}^\Scal$ lies in the dual affine of the set of processes $\mathcal{W}^\Scal$ (see footnote~\cite{fn::affinesep}), as demonstrated in the Appendix~\ref{app::sdp}. The dual problem can also be straightforwardly phrased as an SDP by absorbing the coefficient $\lambda$, as explained in Appendix~\ref{app::sdp}.

SDPs can be solved by efficient numerical packages which, despite being in practice accurate, suffer from imprecision that arise from the use of floating-point variables~\cite{floating_guide,floating_wikipedia}. In order to overcome this issue, we provide in the Appendix~\ref{app::compassis} an algorithm for computer-assisted proofs (see~\cite{peyrl08,rump10} for other examples). Using our method, 
we obtain rigorous upper and lower bounds for $P^\Scal$, arriving at a result that has the same mathematical rigor as an analytical proof.

\begin{theorem}
In the simplest instance of a channel discrimination task using $k=2$ copies, i.e., discrimination between $N=2$ qubit-qubit channels, there exist ensembles for which the maximal probability of successful discrimination of parallel, sequential, separable, and general strategies obey the strict hierarchy
\begin{equation}
    P^\text{PAR}<P^\text{SEQ}<P^\text{SEP}<P^\text{GEN}.
\end{equation}
\end{theorem}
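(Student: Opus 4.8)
The plan is to exhibit an explicit pair of qubit-qubit channels $\{C_1, C_2\}$ (with, say, $p_1 = p_2 = 1/2$) — the amplitude-damping and bit-flip channels mentioned in the introduction, with suitably chosen parameters — and to compute all four quantities $P^\text{PAR}, P^\text{SEQ}, P^\text{SEP}, P^\text{GEN}$ for this ensemble with enough accuracy to separate them. Since the monotone chain $P^\text{PAR}\le P^\text{SEQ}\le P^\text{SEP}\le P^\text{GEN}$ is already established structurally (each tester set contains the previous one), the entire content of the theorem reduces to producing, for \emph{some} ensemble, rigorous numerical bounds showing each inequality is \emph{strict}. Concretely, I would produce rigorous lower bounds $\ell^\Scal \le P^\Scal$ by exhibiting an explicit feasible tester for the primal SDP~\eqref{sdp::primal}, and rigorous upper bounds $u^\Scal \ge P^\Scal$ by exhibiting an explicit feasible point of the dual SDP~\eqref{sdp::dual}; then it suffices to check the numerical separations $u^\text{PAR} < \ell^\text{SEQ}$, $u^\text{SEQ} < \ell^\text{SEP}$, and $u^\text{SEP} < \ell^\text{GEN}$.

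The key steps, in order: (i) fix a candidate ensemble and parametrize the four relevant SDPs using the linear characterizations of parallel, sequential, separable, and general testers given in the Definitions above (the separable case requires the convex-hull parametrization $W^\text{SEP} = q W^{1\prec 2} + (1-q)W^{2\prec 1}$, which is still an SDP in the joint variables $(q W^{1\prec 2}, (1-q)W^{2\prec 1})$); (ii) solve each primal and dual SDP numerically with a standard solver to locate near-optimal primal testers $\widehat{T}^\Scal$ and near-optimal dual points $\widehat{W}^\Scal$; (iii) ``round'' these floating-point solutions to nearby exact (e.g. rational, or algebraic) operators that are \emph{certifiably} feasible — for the primal this means verifying positivity $T_i^\Scal \ge 0$ and the affine tester constraints exactly, and for the dual verifying $p_i C_i^{\otimes 2} \le \lambda \overline{W}^\Scal$ exactly; (iv) evaluate the objective $\sum_i p_i \tr(\widehat{T}^\Scal_i C_i^{\otimes 2})$ and the dual value $\lambda$ in exact arithmetic to obtain the rigorous bounds $\ell^\Scal$ and $u^\Scal$; (v) compare across strategies to confirm the three strict inequalities. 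This is precisely where the computer-assisted-proof machinery of Appendix~\ref{app::compassis} enters: it converts imprecise floating-point SDP output into exact feasibility certificates, so that the final inequalities carry the same rigor as a hand proof.

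The main obstacle I anticipate is step (iii): the numerical optimizer returns a solution that is only \emph{approximately} feasible (eigenvalues slightly negative, affine constraints satisfied only up to solver tolerance), so one cannot directly certify $T_i^\Scal \ge 0$ or the dual operator inequality. The fix is to perturb into the strict interior — project the candidate onto the affine subspace of valid testers exactly, then add a small controlled multiple of a strictly feasible reference tester to absorb the positivity slack, accepting a tiny, rigorously bounded loss in the objective — and this loss must be kept smaller than the gaps $u^\text{PAR} - \ell^\text{SEQ}$ etc., which constrains how finely the ensemble parameters must be tuned. A secondary subtlety is making the separations themselves nonzero and not too small: one needs an ensemble where all three gaps are genuinely open (in particular where the quantum switch–type separable/general advantage survives even though the two plugged-in channels are identical copies of $C_i$), which is why the paper also develops (as promised in the abstract) a systematic search over channel pairs rather than relying on a single lucky example. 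Once a good ensemble is in hand, the remaining work is the routine — if arithmetically heavy — verification of a finite list of exact matrix inequalities and trace evaluations.
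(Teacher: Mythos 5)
Your proposal matches the paper's proof essentially step for step: the authors use the same equiprobable amplitude-damping/bit-flip ensemble (with $\gamma=\nicefrac{67}{100}$, $\eta=\nicefrac{87}{100}$), obtain rigorous lower bounds from exactly-feasible primal testers and upper bounds from exactly-feasible dual affine operators, and the ``perturb into the strict interior'' fix you describe is precisely their Algorithms 1 and 2 (project onto the affine subspace, mix with the identity via $\map{D}_\eta$, renormalize the trace). The approach and even the anticipated obstacle and its resolution coincide with the paper's computer-assisted proof, so nothing further is needed.
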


\noindent\textit{Sketch of the proof.} 
The proof is constructive and considers the channel ensemble composed by $p_1=p_2=\nicefrac{1}{2}$, an amplitude-damping channel $\widetilde{C}_\text{AD}$ (see footnote~\cite{fn::ampdamp}) with damping parameter $\gamma=\nicefrac{67}{100}$, and a bit-flip channel $\widetilde{C}_\text{BF}$ (see footnote~\cite{fn::bitflip}) with flipping parameter $\eta=\nicefrac{87}{100}$. We start by applying standard numerical packages to solve the primal SDP~\eqref{sdp::primal} and obtain an ansatz for the optimal tester of each discrimination strategy. From the numerically imperfect ansatz, we construct a valid tester, following the steps of Algorithm 2 in the Appendix~\ref{app::compassis}. We then compute the probability of successful discrimination with this valid tester, which provides a rigorous lower bound for the maximal probability of success. To calculate a rigorous upper bound, we repeat this procedure, now taking as ansatz the numerical solution of the dual problem~\eqref{sdp::dual} for a dual affine process, and following the steps of Algorithm 1 in the Appendix~\ref{app::compassis}. Applying this method, we computed the following bounds:
$
{\frac{8346}{10000}< P^\text{PAR}<\frac{8347}{10000}}$,
$
{\frac{8446}{10000}< P^\text{SEQ}<\frac{8447}{10000}}$,
$
{\frac{8486}{10000}< P^\text{SEP}<\frac{8487}{10000}}$, and
$
{\frac{8514}{10000}< P^\text{GEN}<\frac{8515}{10000}}$.
The clear gap between the upper bound of one strategy and the lower bound of the next concludes the proof. \hfill$\square$

\begin{figure}
\begin{center}
	\includegraphics[width=\columnwidth]{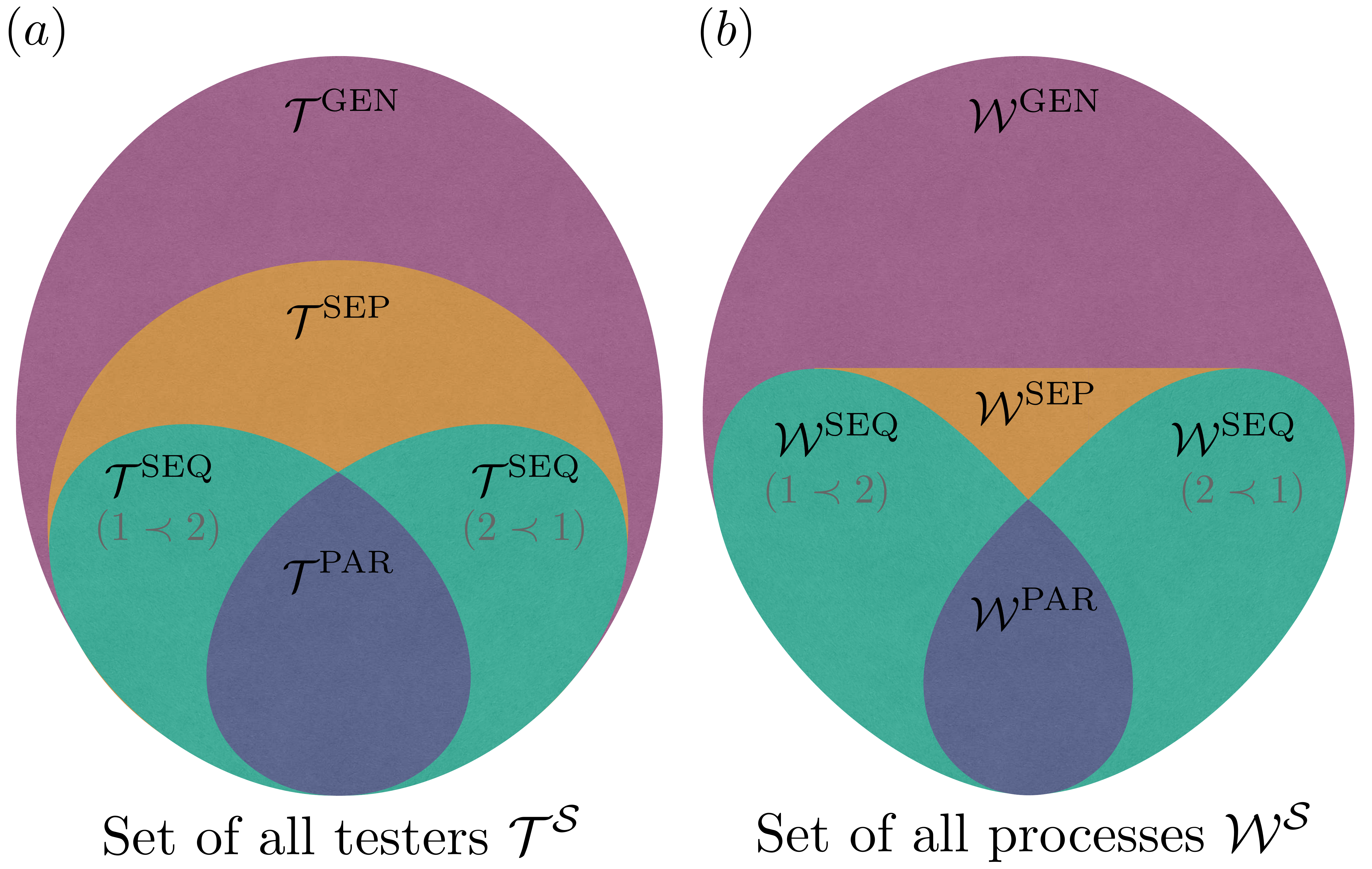}
	\caption{Graphical representation of the nesting relations between (a) the sets of all testers $\Tcal^\Scal\coloneqq\{T^\Scal;T^\Scal=\{T^\Scal_i\}\}$ and (b) the sets of all processes $\Wcal^\Scal\coloneqq\{W^\Scal;W^\Scal=\sum_iT^\Scal_i\}$, where $\Scal\in\{\text{PAR},\text{SEQ},\text{SEP},\text{GEN}\}$ represents parallel, sequential, separable, or general strategies.}
\label{fig::sets}
\end{center}
\end{figure}

Similar gaps can also be found for different ensembles of amplitude-damping and bit-flip channels, and also for ensembles of two amplitude-damping channels, a problem which has been previously studied~\cite{pirandola19,zhuang20,pereira20,rexiti20,katariya20}. Moreover, this phenomenon is not particular to these channels. We have constructed a simple method of sampling pairs of quantum channels that present a gap between all four strategies, for the case of qubit-qubit channels, in approximately $94\%$ of the rounds. See the Appendix~\ref{app::randchannels} for more details.

Having demonstrated the theoretical advantage of these strategies, we would now like to discuss their potential implementation. As already mentioned, for the case of parallel and sequential strategies, it is known that, from every tester, one can construct in an algorithmic manner a state, a channel, and a measurement that constitute a quantum realization for each tester element~\cite{chiribella09}. Therefore, these testers can be physically implemented with quantum circuits, as depicted on Fig.~\ref{fig::realization}(a) and (b).  

For the case of general testers, however, given a tester $T^\text{GEN}=\{T_i^\text{GEN}\}$, we can claim that it can be realized by a process $W\coloneqq \sum_{i=1}^N T_i^\text{GEN}\otimes\ketbra{i}{i}^F\in\Lcal(\Hcal^{I_1O_1I_2O_2}\otimes\Hcal^{F})$, where $\Hcal^{F}$ represents the Hilbert space of a system in the common future of the slots 1 and 2 of $T^\text{GEN}$, and a POVM $M=\{M_i\}_i,\,M_i=\ketbra{i}{i}\in\Hcal^{F}$. Each general tester element is recovered by $T_i^\text{GEN}=W*M_i^T$. A \textit{quantum} realization of $T^\text{GEN}$ would then depend on the ability to physically implement any process matrix $W$, as depicted in Fig.~\ref{fig::realization}(c). Unfortunately, at this point, the physical implementation of general process matrices remains an open question. 

For the case of separable testers, however, a physical implementation is known. Similar to the general case, every separable tester can be constructed from a process $W \coloneqq \sum_{i=1}^N T_i^\text{SEP}\otimes\ketbra{i}{i}^F$ and a measurement given by $M_i\coloneqq\ketbra{i}{i}^{F}$. However, when constructed from separable testers, the process $W$ always satisfies the condition that $\tr_F W = \sum_{i=1}^N T_i^\text{SEP} = W^\text{SEP}$, that is, they are (potentially nonseparable) processes that become separable when the future space is traced out. Such processes always lead to separable strategies, and can be used to realize every separable tester. Remarkably, these processes have recently been shown by Ref.~\cite{wechs21} to be realized by circuits that employ a coherent quantum control of causal orders, implying that all separable strategies, including the ones that we have shown to be advantageous over sequential strategies, can be physically implemented. One example of such a process that only leads to separable testers is the well-studied quantum switch~\cite{chiribella13}. Notice that the testers that can be generated from the quantum switch are an instance of separable testers that are not in the convex hull of sequential testers that are ordered in different directions, meaning that they could potentially be advantageous when compared to sequential strategies. 
Nevertheless, we have not been able to construct an example of a discrimination task for which testers generated by the quantum switch are advantageous.

\textit{Conclusions.} We have demonstrated a new example of the advantage of sequential over parallel strategies for a task of minimum-error discrimination between two qubit-qubit channels. We also established two new classes of strategies that involve indefinite causal order and showed that they can outperform causal ones. Moreover, we proved a strict hierarchy between these four classes of discrimination strategies. Our main example concerns the discrimination of an amplitude-damping and a bit-flip channel; however, we showed that this phenomenon is not unique, by presenting a simple method of constructing pairs of channels that, with very high probability, respect this strict hierarchy. The main technique developed in this paper was a method of computer-assisted proofs, that finds immediate application in a plethora of physics problems that currently rely on numerical optimization. We hope that this method can contribute to paving the way to more rigorous numerical proofs in quantum information science. It is furthermore our hope that our demonstration of the theoretical advantage of indefinite causal order for channel discrimination will further motivate the investigation of the potential implementation of general processes.

\textit{Acknowledgments.} We are thankful to Alastair Abbott and Simon Milz for interesting discussions and to Mateus Ara\'ujo, Nicolai Friis and Cyril Branciard for comments on the manuscript. 
J.B. would like to thank the hospitality of the Murao Group and of The University of Tokyo. J.B. acknowledges the Austrian Science Fund (FWF) through the START project Y879-N27 and the Zukunftskolleg project ZK03. 
M.M. is supported by the MEXT Quantum Leap Flagship Program (MEXT Q-LEAP) Grant Numbers JPMXS0118069605 and JPMXS0120351339 and by the Japan Society for the Promotion of Science (JSPS) through the KAKENHI grants 17H01694, 18H04286, and 21H03394. 
M.T.Q. acknowledges the Austrian Science Fund (FWF) through the SFB project BeyondC (subproject No. F7103), a grant from the Foundational Questions Institute (FQXi) as part of the Quantum Information Structure of Spacetime (QISS) Project (qiss.fr). The opinions expressed in this publication are those of the authors and do not necessarily reflect the views of the John Templeton Foundation. This project has received funding from the European Unions Horizon 2020 research and innovation program under the Marie Skłodowska-Curie Grant Agreement No. 801110. It reflects only the authors’ view; the EU Agency is not responsible for any use that may be made of the information it contains. The Erwin Schrödinger Center for Quantum Science \& Technology (ESQ) has received funding from the Austrian Federal Ministry of Education, Science and Research (BMBWF).

All our code is available in an online repository~\cite{githubMTQ} and can be freely used, edited, and distributed.



%


\onecolumngrid 
\appendix

\setcounter{equation}{0}
\setcounter{theorem}{0}
\setcounter{definition}{0}
\setcounter{figure}{0}
\setcounter{table}{0}

\section*{APPENDIX}

\setcounter{figure}{2}

Here we present support material that complements the main text. It is structured as follows: Appendix~\ref{app::gentesters}. Characterization theorem for general testers, Appendix~\ref{app::sdp}. Semidefinite programming and dual affine spaces, Appendix~\ref{app::compassis}. Computer-assisted proofs, and Appendix~\ref{app::randchannels}. Sampling general channels and the typicality of the hierarchy between discrimination strategies.

\section{Characterization theorem for general testers}\label{app::gentesters}
\setcounter{theorem}{1}

We starting by demonstrating, for sake of completeness, the characterization of general one-copy testers that was presented in the main text, in the language of our paper. This result is already known and follows from Ref.~\cite{chiribella09}.

\begin{theorem}\label{thm::onecopy_charac}
    Let $T=\{T_i\}_{i=1}^N$, $T_i\in\Lcal(\Hcal^I\otimes\Hcal^O)$, called a general one-copy tester, be the most general set of operators that satisfy the relation
    \begin{equation}
        p(i|C) = \tr\left(T_i\,C\right), 
    \end{equation}
    for all Choi operators of quantum channels $C\in\Lcal(\Hcal^I\otimes\Hcal^O)$, where $\{p(i|C)\}$ is a set of probability distributions. Let $W\coloneqq\sum_iT_i$. Then, $T=\{T_i\}$ is a set of operators that satisfy
    \begin{align}
        T_i &\geq 0 \ \ \forall\,i \\
        \tr(W) &= d_O \label{eq::1} \\
        W &=  {_O} W. \label{eq::2}
    \end{align}
\end{theorem}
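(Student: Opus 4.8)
The claim has two halves: first, that any $\{T_i\}$ satisfying the three conditions really does turn $p(i|C)=\tr(T_iC)$ into a legitimate probability rule on all channels, so such sets lie among the valid testers; second, that every valid tester satisfies the three conditions, so they are \emph{all} of them. The plan is to prove the first by a one‑line computation and the second by a linear‑algebra argument together with one robustness observation; the positivity part of the converse is the step I expect to be delicate.

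\emph{Sufficiency.} If $T_i\ge0$ and $C\ge0$ is a channel Choi operator, then $\tr(T_iC)\ge0$, since the trace of a product of two positive operators is non‑negative. For normalisation, insert $W={}_O W=\tr_OW\otimes\frac{\id^O}{d_O}$ and use $\tr_OC=\id^I$:
\begin{equation}
\sum_i\tr(T_iC)=\tr(WC)=\frac{1}{d_O}\,\tr_I\!\big[(\tr_OW)(\tr_OC)\big]=\frac{1}{d_O}\,\tr_I(\tr_OW)=\frac{\tr(W)}{d_O}=1 ,
\end{equation}
so $\{p(i|C)\}$ is a probability distribution for every channel, and (as asserted in the main text) one can even construct an explicit state/POVM realisation of such a $\{T_i\}$.

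\emph{Necessity, structure of $W$.} The Choi operators of channels affinely span the real subspace $\{X=X^\dagger:\tr_OX=\id^I\}$, since the full‑rank Choi operator $\frac{1}{d_O}\id^{IO}$ of the completely depolarising channel lies in its relative interior. Hence $\sum_i\tr(T_iC)=1$ for all channels upgrades to $\tr(WX)=1$ for every Hermitian $X$ with $\tr_OX=\id^I$, so $\tr(WY)=0$ whenever $Y=Y^\dagger$ and $\tr_OY=0$. The Hilbert–Schmidt orthogonal complement of $\{Y:\tr_OY=0\}$ inside the Hermitian operators is precisely $\{A^I\otimes\id^O\}$, so $W=A^I\otimes\id^O={}_O W$ for some Hermitian $A^I$; evaluating $\tr(WC)=1$ on a single channel gives $\tr A^I=1$, i.e.\ $\tr(W)=d_O$. (A preliminary step lets one take the $T_i$ Hermitian: the same orthogonality argument forces $T_i-T_i^\dagger\in\{A^I\otimes\id^O\}$ with $\tr A^I=0$, which is invisible to every $\tr(T_iC)$.)

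\emph{Necessity, positivity — the main obstacle.} Obtaining $T_i\ge0$ is the step where asking for validity only on trace‑preserving channels is not by itself enough: the conical hull of channel Choi operators is $\{M\ge0:\tr_OM\propto\id^I\}$, a proper subcone of the PSD cone, so $\tr(T_iC)\ge0$ for all channels does not force $T_i\ge0$. I would close the gap with the standard physical requirement that a tester must also assign non‑negative weights to completely positive, trace‑\emph{non‑increasing} maps (instrument elements): the Choi operators of such maps, $\{C\ge0:\tr_OC\le\id^I\}$, do have conical hull equal to the whole PSD cone, whence $T_i\ge0$. Equivalently, one may invoke the realisation $T_i=\rho*M_i^T$ with $\rho$ a state and $\{M_i\}$ a POVM (main text) together with the fact that a link product of positive operators is positive. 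Combining the pieces, the $\{T_i\}$ coming from valid tests are exactly those with $T_i\ge0$, $\tr(W)=d_O$, and $W={}_O W$, as claimed.
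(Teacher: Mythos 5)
Your proof is correct, and for the normalization constraints it is essentially the paper's own argument in different clothing: the paper parametrizes channel Choi operators as $C = X - {}_O X + \id/d_O$ with $X$ self-adjoint and invokes the self-duality of the trace-and-replace map ${}_O(\cdot)$ to extract $\tr(W)=d_O$ and $W={}_OW$; your affine-span-plus-orthogonal-complement argument (anchored at the full-rank Choi operator of the depolarizing channel) is the same computation packaged as the statement that $\{Y=Y^\dagger:\tr_O Y=0\}^\perp=\{A\otimes\id^O\}$. Where you genuinely depart from the paper is the positivity step, and your instinct that it is the delicate one is right. The paper simply writes $\tr(T_i\,C)\ge 0\ \ \forall\, C\ge 0 \iff T_i\ge 0$, i.e., it silently quantifies over the entire positive-semidefinite cone rather than over channel Choi operators only, which strengthens the stated hypothesis. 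Your observation that some strengthening is necessary is correct: the dual cone of $\{C\ge 0:\tr_OC=\id^I\}$ contains every operator of the form $P + A\otimes\id^O$ with $P\ge0$ and $\tr (A)=0$, so for instance $T_1=\frac{1}{2d_O}\id^{IO}+A\otimes\id^O$ and $T_2=\frac{1}{2d_O}\id^{IO}-A\otimes\id^O$ with $A$ traceless and large return $p(1|C)=p(2|C)=1/2$ on every channel while violating $T_1\geq 0$. Your repair via trace-non-increasing completely positive maps, whose Choi operators conically generate the full PSD cone, is sound, and it parallels the extra hypotheses (auxiliary systems) that the paper adds explicitly in the two-copy characterization for a closely related reason. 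One caveat: your alternative repair, invoking the realization $T_i=\rho*M_i^T$, is not an independent derivation but a change of definition — it restricts attention to physically realized testers, for which positivity holds by construction — so it should be offered as motivation for the extra hypothesis rather than as a proof that the probability rule on channels alone forces $T_i\ge0$ (it cannot, by the example above).
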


\begin{proof}
    In order to guarantee that $\{p(i|C)\}$ is a valid probability distribution, two conditions must be imposed: positivity and normalization.
    
    Positivity:
    \begin{equation}
        p(i|C) = \tr\left(T_i\,C\right) \geq 0 \ \ \forall\,i,C\geq0 \ \ \iff \ \ T_i\geq 0 \ \ \forall\,i.
    \end{equation}
    
    Normalization:
    \begin{equation}
        \sum_i p(i|C) = \tr(\sum_i T_i\,C) = \tr\left(W\,C\right) = 1 \ \ \forall\,\text{channels } C,
    \end{equation}
    where $C$ is the Choi operator of a quantum channel, and therefore of a trace-preserving map, which can be parametrized as $C = X - _{O}X + \frac{\id}{d_O}$, where $X$ is a self-adjoint operator, using the same technique as in Appendix B of Ref.~\cite{araujo15}. Then,
    \begin{equation}
        \tr[W\,(X- _{O}X+\frac{\id}{d_O})] = 1 \ \ \forall\,\text{self-adjoint } X.
    \end{equation}
    We can split this in two cases: $X=0$ and $X\not=0$.
    
    For $X=0$:
    \begin{equation}
        \tr[W\,(X- _{O}X+\frac{\id}{d_O})] = \frac{\tr(W)}{d_O} = 1 \ \ \iff \ \ \tr(W) = d_O.
    \end{equation}
    
    For $X\not=0$:
    \begin{align}
        \tr[W\,(X- _{O}X+\frac{\id}{d_O})] = \tr\left[W\,(X- _{O}X)\right] + 1 &= 1  \ \ \forall\, X\not=0 \\
       \iff \tr\left[W\,(X- _{O}X)\right] &= 0 \ \ \forall\, X\not=0  \label{eq::selfdual1} \\
       \iff \tr\left[(W- _{O}W)\,X\right] &= 0 \ \ \forall\, X\not=0 \ \ \iff W- _{O}W = 0. \label{eq::selfdual2}
    \end{align}
The equivalence between Eqs.~\eqref{eq::selfdual1} and~\eqref{eq::selfdual2} is given by the self-duality of the `trace-and-replace' map, namely $\tr[W\,_{O}X]=\tr[_{O}W\,X]$.
    
    Together, conditions $\tr(W) = d_O$ (Eq.~\eqref{eq::1}) and $W= _{O}W$ (Eq.~\eqref{eq::2}) imply that $W$ can be written as $W=\sigma\otimes\id^O$, where $\sigma\in\Lcal(\Hcal^I)$ is a normalized quantum state.
\end{proof}

Now we prove a new characterization theorem, the one of general two-copy testers. In this case, we will need additional hypotheses. One is the hypothesis that a tester may not only be able to act on two copies of the same channel but also be able to act on two different, independent channels. This hypothesis is physically motivated in the sense that, if a general tester is a device in a quantum lab that can act on two copies of the same channel, then one should also be able to plug in two different channels and have it perform a meaningful physical operation. The second is that these channels should be allowed to also act on auxiliary, potentially entangled, systems, and when a general tester acts upon part of these channels, the operation it performs should still result in a valid probability distribution. This last hypothesis is automatically satisfied in the one-copy case. 

Formally, we have:
    
\begin{theorem}
    Let $T^\text{GEN}=\{T^\text{GEN}_i\}_{i=1}^N$, $T^\text{GEN}_i\in\Lcal(\Hcal^{I_1}\otimes\Hcal^{O_1}\otimes\Hcal^{I_2}\otimes\Hcal^{O_2})$, called a general two-copy tester, be the most general set of operators that satisfy the relation
    \begin{equation}
        p(i|C_A,C_B,\rho_{AB}) = \tr\left[(T^\text{GEN}_i\otimes\rho_{AB})(C_A\otimes C_B)\right], 
    \end{equation}
    for all Choi operators of quantum channels $C_A\in\Lcal(\Hcal^{I_1}\otimes\Hcal^{O_1}\otimes\Hcal^{\text{aux}_1})$ and $C_B\in\Lcal(\Hcal^{I_2}\otimes\Hcal^{O_2}\otimes\Hcal^{\text{aux}_2})$, and for all quantum states $\rho_{AB}\in\Lcal(\Hcal^{\text{aux}_1}\otimes\Hcal^{\text{aux}_2})$, where $\{p(i|C)\}$ is a set of probability distributions. Let $W^\text{GEN}\coloneqq\sum_i T^\text{GEN}_i$. Then, $T^\text{GEN}=\{T^\text{GEN}_i\}$ is a set of operators that satisfy
    \begin{align}
    T^\text{GEN}_i&\geq 0 \ \ \forall\,i \label{eq::0} \\
    \tr(W^\text{GEN})&=d_{O_1}d_{O_2} \label{eq::a}\\
    _{I_1O_1}W^\text{GEN} &= _{I_1O_1O_2}W^\text{GEN} \label{eq::b}\\
    _{I_2O_2}W^\text{GEN} &= _{O_1I_2O_2}W^\text{GEN} \label{eq::c}\\
    W^\text{GEN} = _{O_1}W^\text{GEN} &+ _{O_2}W^\text{GEN} - _{O_1O_2}W^\text{GEN}. \label{eq::d}
    \end{align}
\end{theorem}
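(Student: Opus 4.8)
The plan is to mimic the structure of the one-copy proof (Theorem~\ref{thm::onecopy_charac}): the requirement that $\{p(i\mid C_A,C_B,\rho_{AB})\}$ be a probability distribution splits into a \emph{positivity} part, which will pin down Eq.~\eqref{eq::0}, and a \emph{normalization} part, which will pin down Eqs.~\eqref{eq::a}--\eqref{eq::d}. What is new is that both steps must genuinely use the two extra hypotheses — two \emph{independent} channels rather than two copies of one, and an ancilla $\rho_{AB}$ that may be entangled and that the tester never touches directly.

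\emph{Positivity.} The easy direction, $T^\text{GEN}_i\ge 0\Rightarrow p(i\mid\cdot)\ge 0$, is immediate because the trace of a product of two positive semidefinite operators is nonnegative, and $T^\text{GEN}_i\otimes\rho_{AB}$ and $C_A\otimes C_B$ are both positive semidefinite. For the converse I would choose $C_A$ and $C_B$ to be (essentially) identity channels that route their inputs unchanged onto the tester's slots, and let $\rho_{AB}$ range over \emph{all} bipartite states on the ancillary spaces, in particular entangled ones; the contraction $\tr[(T^\text{GEN}_i\otimes\rho_{AB})(C_A\otimes C_B)]$ then collapses, up to a positive normalization, to $\langle\psi| T^\text{GEN}_i|\psi\rangle$ with $\ket\psi$ sweeping out enough directions in $\Hcal^{I_1O_1I_2O_2}$ — including entangled vectors — to force $T^\text{GEN}_i\ge 0$. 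This is precisely where the ancilla hypothesis is indispensable: restricting to product inputs would only certify that $T^\text{GEN}_i$ is block-positive across the $I_1O_1\mid I_2O_2$ cut, which is strictly weaker than $T^\text{GEN}_i\ge 0$.

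\emph{Normalization.} Writing $W^\text{GEN}=\sum_i T^\text{GEN}_i$, the condition $\sum_i p(i\mid C_A,C_B,\rho_{AB})=1$ becomes $\tr[(W^\text{GEN}\otimes\rho_{AB})(C_A\otimes C_B)]=1$ for all channels and all ancilla states. After contracting the ancilla, the object actually seen by $W^\text{GEN}$ is the Choi operator $D$ of a bipartite channel $I_1I_2\to O_1O_2$ assembled from two local uses plus shared correlations; as $C_A$, $C_B$, $\rho_{AB}$ vary, these $D$'s sweep out the affine hull of the Choi operators of all CPTP maps realizable in this way. Parametrizing $C_A=\tfrac{\id^{I_1O_1}}{d_{O_1}}+\hat C_A$ with $\tr_{O_1}\hat C_A=0$ and likewise $C_B$, and demanding the value $1$ on the whole affine family, the constant term gives $\tr W^\text{GEN}=d_{O_1}d_{O_2}$ (Eq.~\eqref{eq::a}); the terms first order in $\hat C_A$ alone and in $\hat C_B$ alone give Eqs.~\eqref{eq::b} and~\eqref{eq::c} after invoking the self-duality of the `trace-and-replace' map exactly as between Eqs.~\eqref{eq::selfdual1}--\eqref{eq::selfdual2}; and the bilinear term in $\hat C_A\otimes\hat C_B$ gives Eq.~\eqref{eq::d}, since it amounts to $(\id-{}_{O_1})\otimes(\id-{}_{O_2})$ applied to $W^\text{GEN}$ vanishing. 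Conversely, a $W^\text{GEN}$ obeying Eqs.~\eqref{eq::a}--\eqref{eq::d} lies in the dual affine space of that set of channels, so normalization holds automatically, and $T^\text{GEN}_i\ge 0$ supplies positivity of the $p(i\mid\cdot)$; this closes the equivalence.

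\emph{Main obstacle.} The delicate point is to nail down precisely the affine hull of the effective test channels $D$: one must show that two \emph{local} channels together with an entangled ancilla generate just enough operators that the only linear constraints surviving on $W^\text{GEN}$ are Eqs.~\eqref{eq::b}--\eqref{eq::d} — no more and no fewer — and, in parallel, that the entangled ancilla is strong enough to lift block-positivity to genuine positivity of $T^\text{GEN}_i$. Keeping straight which subsystem plays the role of input, output, or ancilla in each link-product contraction is where the care is concentrated; the rest is a routine transcription of the one-copy argument.
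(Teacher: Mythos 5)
Your proposal is correct and follows essentially the same route as the paper: split the probability condition into positivity (giving $T^\text{GEN}_i\geq0$, with the entangled ancilla indeed being what upgrades block-positivity to genuine positivity) and normalization (giving Eqs.~\eqref{eq::a}--\eqref{eq::d}). The only difference is one of presentation: where you carry out the affine expansion around the depolarizing channel explicitly, the paper simply observes that the normalization condition is verbatim the bipartite process-matrix normalization of Ref.~\cite{araujo15} and imports the result from Appendix~B there — which is exactly the calculation you sketch, including the resolution of your ``main obstacle.''
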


\begin{proof}
    Again, in order to guarantee that $\{p(i|C_A,C_B)\}$ is a valid probability distribution, the conditions of positivity and normalization must be imposed.
    
    Positivity:
    \begin{equation}
        p(i|C_A,C_B,\rho_{AB}) = \tr\left[(T^\text{GEN}_i\otimes\rho_{AB})(C_A\otimes C_B)\right] \geq 0 \ \ \forall\,i,C_A\geq 0,C_B\geq 0,\rho_{AB}\geq0 \ \ \iff \ \ T^\text{GEN}_i\geq 0 \ \ \forall\,i.
    \end{equation}
    
    Normalization:
    \begin{align}
    \begin{split}\label{eq::normW}
        \sum_i p(i|C_A,C_B,\rho_{AB}) = \tr[(\sum_i T^\text{GEN}_i\otimes\rho_{AB})(C_A\otimes C_B)] = \tr\left[(W^\text{GEN}\otimes\rho_{AB})(C_A\otimes C_B)\right] = 1& \\
        \forall\,\text{channels } C_A,C_B \text{ and states } \rho_{AB}.&
    \end{split}
    \end{align}
    Notice that condition Eq.~\eqref{eq::normW} is exactly the normalization condition that, in Appendix B of Ref.~\cite{araujo15}, defines $W^\text{GEN}$ as a bipartite process matrix. Hence, it immediately follows from the proof contained therein that $W^\text{GEN}$ must respect Eqs.~\eqref{eq::a}-\eqref{eq::d}.
\end{proof}

Intuitively, Eq.~\eqref{eq::a} can be understood as the constraint that guarantees the non-negativity of the elements of the probability distributions, while Eqs.~\eqref{eq::a}-\eqref{eq::d} guarantee the normalization of the probability distributions. Equations~\eqref{eq::b} and~\eqref{eq::c} guarantee a local ordering of the inputs and outputs within each slot. Physically, these equations can be understood as the constraints that forbid local time loops. The last constraint, in Eq.~\eqref{eq::d}, can be physically understood as the constraint that forbids global time loops for occurring, which would allow one slot to exploit the channels that connect it to the second slot to feed information to its own past. A more in-depth discussion of the physical consequences of these constraint is provided in Ref.~\cite{oreshkov12}. 

\section{Semidefinite programming formulation and dual affine spaces }\label{app::sdp}
\setcounter{definition}{4}

In this section we present a method to obtain a dual problem formulation for a class of convex optimization problems which covers the SDP presented in our main text. This method employs ideas and techniques first presented in Ref.~\cite{chiribella16}.

A subset of linear operators $\mathcal{W}\subseteq \mathcal{L}(\H)$  is said to be affine if for every set of real numbers $\{w_i\}_i$ respecting $\sum_i w_i=1$, and for every subset $\{W_i\}_i\subseteq\mathcal{W}$ we have that $\left(\sum_i w_i W_i\right) \in \mathcal{W}$. 

\begin{definition}[Dual affine space~\cite{chiribella16}.]
	Let $\mathcal{W}\subseteq \mathcal{L}(\H)$ be a set of linear operators. The dual affine space $\overline{\mathcal{W}}$ of $\mathcal{W}$ is defined via
    \begin{equation}
	    \overline{W}\in \overline{\mathcal{W}} \; \text{ when } \; \tr(\overline{W}\, W)=1, \; \forall \, W \in \mathcal{W} .
    \end{equation}
\end{definition}

\begin{figure}
\begin{center}
	\includegraphics[width=\columnwidth]{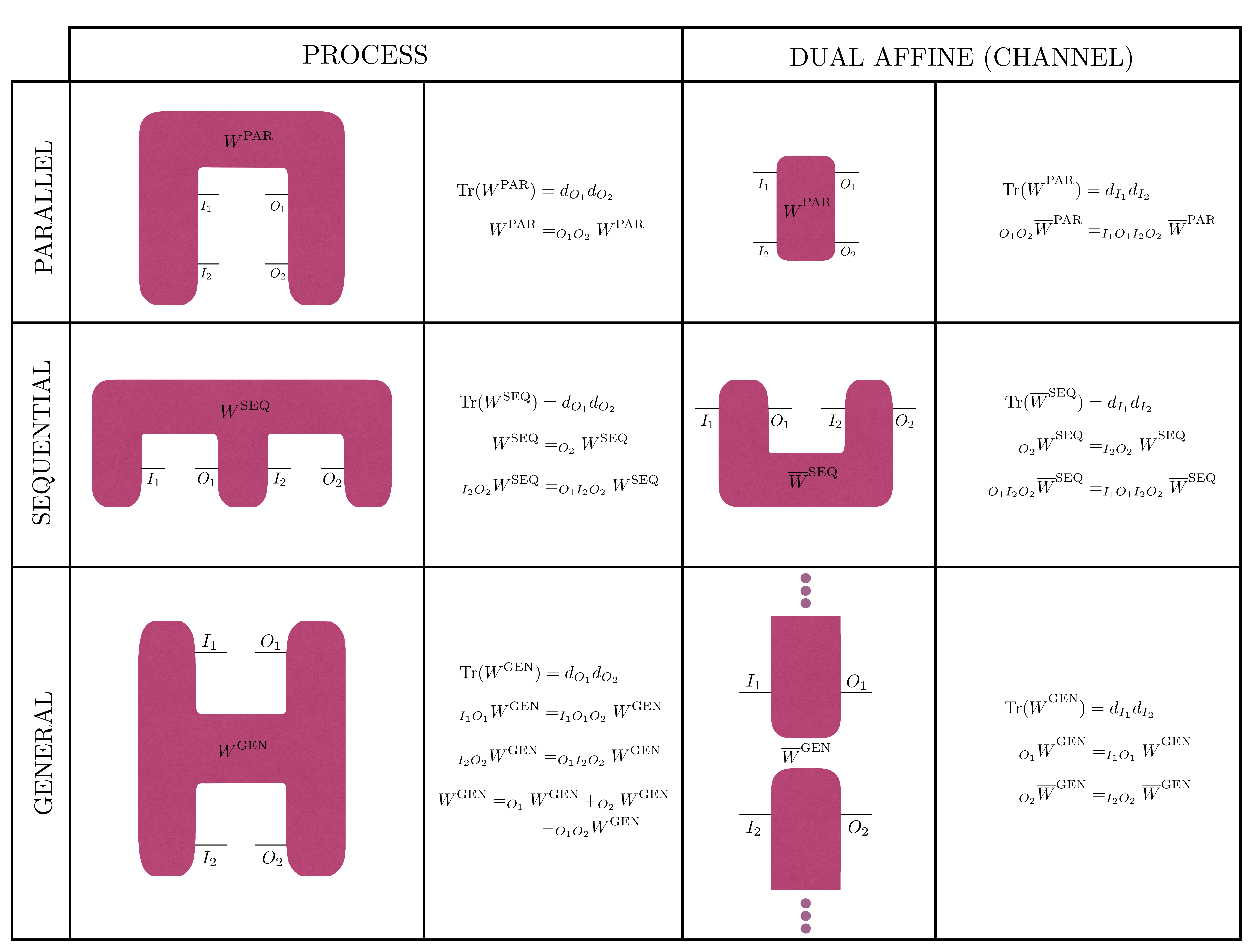}
	\caption{Normalization constraints for parallel, sequential, and general two-slot processes, and for their dual affine spaces, which correspond to the normalization constraints for bipartite channels, bipartite channels with memory, and bipartite no-signalling channel respectively. Note that the dual affine space of a set $\mathcal{W}^\Scal$ may be intuitively visualized as the largest set of `quantum objects' $\overline{\mathcal{W}}^\Scal$ such that `connecting' objects from $\mathcal{W}^\Scal$ to objects from $\overline{\mathcal{W}}^\Scal$ always lead to the scalar number $1$.}
\label{fig::affineduals}
\end{center}
\end{figure}

If $\mathcal{W}\subseteq\mathcal{L}(\H)$ is the set of all quantum states, i.e., positive semidefinite operators $W\in\mathcal{L}(\H)$ such that $\tr(W)=1$, the only operator $\overline{W}$ such that $\tr(\overline{W} \, W)=1, \; \forall W\in\mathcal{W}$ is the identity operator. Hence, the dual affine space of set of quantum states has a single element which is the identity operator $\id$ and corresponds to the normalisation constraint for quantum measurements.

If $\mathcal{W}^\text{PAR}\subseteq\mathcal{L}(\H^I\otimes \H^O)$, where $\H^I = \bigotimes_{i=1}^k I_i$ and $\H^O = \bigotimes_{i=1}^k O_i$ stands for the set of all parallel processes, i.e., positive semidefinite operators that can be written as $W^\text{PAR}=\sigma^I \otimes \id^O$, with $\tr(\sigma)=1$, one can check that its dual affine space is given by a set of linear operators $\overline{W}^\text{PAR}$ respecting $\tr_O \overline{W}^\text{PAR}=\id^I$, which is the set of quantum channels without the positivity condition.

If $\mathcal{W}^\text{SEQ}\subseteq\mathcal{L}(\H^I\otimes \H^O)$ stands for the set of all sequential processes, 
Ref.~\cite{chiribella16} shows that its dual affine space  $\overline{\mathcal{W}}^\text{SEQ}$ is given by the set of Choi operators of  $k$-partite channels with memory%
\footnote{Note that a $k$-partite channel with memory is formally equivalent to a quantum comb with $k-1$ slots \cite{chiribella09}.} \cite{kretschmann05} %
without the positivity constraint.
In particular, for the two-slot case, an operator  $\overline{W}^\text{SEQ}\in\mathcal{L}(\H^{I_1}\otimes\H^{O_1}\otimes\H^{I_2}\otimes\H^{O_2})$ belongs to the dual affine space of the sequential processes if and only if $\overline{W}^\text{SEQ}$ respects
\begin{align}
    _{O_2}\overline{W}^\text{SEQ}&=_{I_2O_2}\overline{W}^\text{SEQ} \\
    _{O_1I_2O_2}\overline{W}^\text{SEQ}&=_{I_1O_1I_2O_2}\overline{W}^\text{SEQ} \\
    \tr(\overline{W}^\text{SEQ})&=d_{I_1}d_{I_2}.
\end{align}

If $\mathcal{W}^\text{GEN}\subseteq\mathcal{L}(\H^I\otimes \H^O)$ stands for the set of all general processes, 
Ref.~\cite{chiribella16} shows that its dual affine space  $\overline{\mathcal{W}}^\text{GEN}$ is given by the set of Choi operators of  $k$-partite no-signalling channels \cite{beckman01,eggeling02} without the positivity constraint.
In particular, for the two-slot case, an operator $\overline{W}^\text{GEN}\in\mathcal{L}(\H^{I_1}\otimes\H^{O_1}\otimes\H^{I_2}\otimes\H^{O_2})$ belongs to the dual affine space of the general processes if and only if $\overline{W}^\text{GEN}$ respects
\begin{align}
    _{O_2}\overline{W}^\text{GEN}&=_{I_2O_2}\overline{W}^\text{GEN} \\
    _{O_1}\overline{W}^\text{GEN}&=_{I_1O_1}\overline{W}^\text{GEN} \\
    \tr(\overline{W}^\text{GEN})&=d_{I_1}d_{I_2}.
\end{align} 

We have summarized the normalization constraints of parallel, sequential, and general processes and their respective dual affine spaces in Fig.~\ref{fig::affineduals}.

We now describe a method for obtaining the dual formulation of the SDPs presented in this paper based on the concept of dual affine spaces. In the main text we have defined the primal optimization problem as

\begin{flalign}
\begin{aligned}
    \textbf{given}\ \      &\{p_i,C_i\} \\
    \textbf{maximize}\ \  
    &\sum_i p_i \tr \left(T_i^\Scal \, C_i^{\otimes 2} \right) \\ 
    \textbf{subject to} \ \ &\{T_i^\Scal\}\ \in \Tcal^\Scal, 
\end{aligned}&&
\end{flalign}

where $\Tcal^\Scal$ is the set of all testers with strategy $\Scal$. This problem can also be written as

\begin{flalign} \label{SDP:primal2}
\begin{aligned}
    \textbf{given}\ \      &\{p_i,C_i\} \\
    \textbf{max}\ \   &\sum_i p_i \tr\left(T_i^\Scal\,C_i^{\otimes 2}\right) \\ 
    \textbf{s.t.}\ \ &T_i^\Scal\geq0 \\
    &\sum_i T_i^\Scal \in \mathcal{W}^\Scal
\end{aligned}&&
\end{flalign}
where $\mathcal{W}^\Scal$ is set of all processes with strategy $\Scal$.

We start this section by considering the above optimization problem for the case where the set $\mathcal{W}^\Scal$ is affine, which is the case for parallel, sequential, and general processes. For these strategies, we do not need to restrict ourselves to the case of $k=2$ copies of the input channel $C_i$ but the method applies for any $k\in\mathbb{N}$. Note that the normalization constraints of separable processes do not form an affine set, for which reason the case of separable testers will be tackled later. We also point that the for $k>2$, the definition of $k$-slots separable processes have several nuances and there is still no consensus on a single definition \cite{wechs19}.

For finite dimensions, if $\mathcal{W}$ is an affine set we have  that $\overline{\overline{\mathcal{W}}}=\mathcal{W}$, i.e., the dual affine space of the dual affine space of $\mathcal{W}$ is simply $\mathcal{W}$. Hence, for cases where $\mathcal{W}$ is affine, the primal SDP presented in Eq.~\eqref{SDP:primal2} can be written as:

\begin{flalign}
\begin{aligned}
    \textbf{given}\ \      &\{p_i,C_i\} \\
    \textbf{max}\ \   &\sum_i p_i \tr\left(T_i^\Scal\,C_i^{\otimes k}\right) \\ 
    \textbf{s.t.}\ \ & T_i^\Scal\geq0 \\
    &W^\Scal:=\sum_i T_i^\Scal \\
    & \tr(W^\Scal \overline{W}^\Scal)=1,\ \  \forall \, \overline{W}^\Scal\in   \overline{\mathcal{W}}^\Scal,
\end{aligned}&&
\end{flalign}
a formulation which has infinitely many constraints 
$\left[\tr(W^\Scal \; \overline{W}^\Scal)=1, \forall\, \overline{W}^\Scal\in \overline{\mathcal{W}}^\Scal \right]$. These infinitely many constraints can be made finite by writing 
$\left[\tr(W^\Scal \; \overline{W}_j^\Scal)=1, \forall\, j\right]$ where $\{\overline{W}_j^\Scal\}_j$ is an affine basis for $\overline{\mathcal{W}}^\Scal$, i.e., every $\overline{W}^\Scal\in\overline{\mathcal{W}}^\Scal$ can be written as $\overline{W}^\Scal=\sum_j w_j \overline{W}_j^\Scal$ for a set of coefficients $\{w_j\}_j$ respecting $\sum_j w_j=1$. The Lagrangian of the maximization problem can then be written as

\begin{align}
	L &= \sum_{i} p_i \tr\left(C_i^{\otimes k} \; T_i^\Scal\right) + \sum_i \tr\left(T_i^\Scal \Gamma_i\right) + \sum_{j} \left[ 1-\tr\left(\sum_i T_i^\Scal \overline{W}_j^\Scal\right)\right] \lambda_j.
\end{align}

Hence, if $\Gamma_i\geq0$ and $\{T_i^\Scal\}_i$ is a tester, $L\geq \sum_{i} p_i \tr(C_i^{\otimes k} \; T_i^\Scal)$. By re-arranging terms, the Lagrangian can be written as
\begin{align}
L 
&= \sum_i \tr\left[T^\Scal_i\left(p_i C_i^{\otimes k} + \Gamma_i - \sum_j \overline{W}_j^\Scal\lambda_j \right) \right]  + \sum_j\lambda_j.
\end{align}
We then arrive at the dual problem by taking the supremum of the Lagrangian over the primal variables $\{T^\Scal_i\}_i$. Finally, the solution of the dual problem will be given by the minimization over the dual variables $\{\Gamma_i\}_i$ and $\{\lambda_i\}_i$ under the constraint that $\Gamma_i\geq 0, \forall\,i$. The dual problem can be written as

\begin{flalign}
\begin{aligned}
    \textbf{given}\ \      &\{p_i,C_i\} \\
    \textbf{minimize}\ \   &\sum_j \lambda_j \\ 
    \textbf{s.t.}\ \ & \Gamma_i\geq0 \ \ \forall\, i \\
    & p_i C_i^{\otimes k} + \Gamma_i + \sum_j \lambda_j \overline{W}_j^\Scal=0, \ \ \forall\, i
\end{aligned}&&
\end{flalign}

Removing the dummy variables $\{\Gamma_{i}\}$, we obtain

\begin{flalign}	
\begin{aligned}
    \textbf{given}\ \      &\{p_i,C_i\} \\
    \textbf{min}\ \   &\sum_j \lambda_j \\ 
    \textbf{s.t.}\ \ 
    & p_i C_i^{\otimes k} \leq \sum_j \lambda_j \overline{W_j}^\Scal \ \  \forall\, i .
\end{aligned}&&
\end{flalign}

The requirement of having an affine basis $\{\overline{W}_j^\Scal\}_j$ can be dropped by defining $\lambda:=\sum_j \lambda_j$ and $\overline{W}^\Scal:=\sum_j \frac{\lambda_j  \overline{W_j}^\Scal}{\lambda}$ and noting that, by construction, for any choice of coefficient $\lambda_j$, $\overline{W}^\Scal$ is an affine combination of the affine basis elements $\{\overline{W}_j^\Scal\}_i$, hence $\overline{W}^\Scal$ necessarily belongs to $\overline{\mathcal{W}}^\Scal$. We can then write

\begin{flalign}	
\begin{aligned}
    \textbf{given}\ \      & \{p_i,C_i\} \\
    \textbf{min}\ \   & \lambda \\ 
    \textbf{s.t.}\ \ 
    & p_i C_i^{\otimes k} \leq \lambda \overline{W}^\Scal \\ 
    & \overline{W}^\Scal \in \overline{\mathcal{W}}^\Scal,
\end{aligned}&&
\end{flalign}
where the dual affine space of the sets used in this work are explicitly presented in Fig.~\ref{fig::affineduals}.

Due to the product of variables $\lambda$ and  $\overline{W}^\Scal$,  the constraint $p_i C_i^{\otimes k} \leq \lambda \overline{W}^\Scal$ is not linear. This problem can be easily circumvented by noting that the elements of dual affine spaces have a fixed trace $\tr(\overline{W}^\Scal)$.  We can then ``absorb'' the variable $\lambda$ into $\overline{W}^\Scal$ by defining $\overline{W'}^\Scal:=\lambda W^\Scal$. 

For the case of separable testers, the primal problem can be formulated as

\begin{flalign} \label{SDP:SEPprimal}
\begin{aligned}
    \textbf{given}\ \      &\{p_i,C_i\} \\
    \textbf{max}\ \   &\sum_i p_i \tr\left(T_i^\text{SEP}\,C_i^{\otimes 2}\right) \\ 
    \textbf{s.t.}\ \ & T_i^\text{SEP}\geq0 \\
    &W^\text{SEP}:=\sum_i T_i^\text{SEP} = q W^{1\prec2} + (1-q)W^{2\prec1} \\
    & W^{1\prec2} \in {\mathcal{W}^{1\prec2}}, \quad W^{2\prec1}  \in {\mathcal{W}^{2\prec1}} \\
    & q\in[0,1],
\end{aligned}&&
\end{flalign}
where ${\mathcal{W}^{i\prec j}}$ is the set of sequential processes with slot $i$ coming before slot $j$. The SDP described in Eqs.~\eqref{SDP:SEPprimal} can also be written as
\begin{flalign} \label{SDP:SEPprimal2}
\begin{aligned}
    \textbf{given}\ \      &\{p_i,C_i\} \\
    \textbf{max}\ \   &\sum_i p_i \tr\left(T_i^\text{SEP}\,C_i^{\otimes 2}\right) \\ 
    \textbf{s.t.}\ \ & T_i^\text{SEP}\geq0 \\
    &\sum_i T_i^\text{SEP} =  W^{1\prec2} + W^{2\prec1} \\
    & \tr\left(W^{1\prec2}\overline{W}^{1\prec2}_a\right)=q, \quad  \forall a \\ & \tr\left(W^{2\prec1}\overline{W}^{2\prec1}_b\right)=1-q, \quad  \forall b \\
    & W^{1\prec2} \geq 0, \ \ W^{2\prec1}\geq0 
\end{aligned}&&
\end{flalign}
where the set $\left\{\overline{W}^{i\prec j}_l\right\}_l$ is an basis for the dual affine space of ordered processes. 

The Lagrangian of the SDP presented in Eqs.~\eqref{SDP:SEPprimal2} can be written as
\begin{align}
L &=
\sum_{i} p_i \tr\left(C_i^{\otimes 2} \; T_i^\text{SEP}\right) 
+ \sum_i \tr\left(T_i^\text{SEP} \Gamma_i\right) 
+\sum_i \tr\left[(T_i^\text{SEP}-W^{1\prec2}-W^{2\prec1})H\right] \\
&+ \sum_{a} \left[ q-\tr\left(W^{1\prec2} \overline{W}_a^{1\prec2}\right)\right] \lambda_a^{1\prec2}
+ \sum_b \left[ (1-q) -\tr\left(W^{2\prec1} \overline{W}_b^{2\prec1}\right)\right] \lambda_b^{2\prec1} \\
&+ \tr(W^{1\prec2} \sigma^{1\prec2}) + \tr(W^{2\prec1}\sigma^{2\prec1}).
\end{align}
By re-arranging terms we obtain
\begin{align}
L &=
\tr\left[T_i^\text{SEP}(p_iC_i^{\otimes 2} + \Gamma_i + H)\right]\\
&+ \tr\left[ W^{1\prec2} \left(\sigma^{1\prec2}- H - \sum_a \overline{W}_a^{1\prec2} \lambda^{1\prec2}_a \right)  \right]
+ \tr\left[ W^{2\prec1} \left(\sigma^{2\prec1} - H - \sum_b \overline{W}_b^{2\prec1} \lambda^{2\prec1}_b \right)  \right] \\
& +q\left(\sum_a \lambda^{1\prec2}_a-\sum_b \lambda^{2\prec1}_b\right) 
+ \sum_b \lambda^{2\prec1}_b.
\end{align}
Which leads to the  dual problem
\begin{flalign}
\begin{aligned}
    \textbf{given}\ \      &\{p_i,C_i\} \\
    \textbf{minimize}\ \   &\sum_b \lambda^{2\prec1}_b \\ 
    \textbf{s.t.}\ \ & \Gamma_i\geq0 \ \ \forall\, i \\
& \sigma^{1\prec2}\geq0, \ \ \sigma^{2\prec1}\geq0 \ \ \\
& q^0\geq0, \ \ q^{1\prec2}\geq0 \ \ \\
& \Gamma_i = -p_i C_i^{\otimes 2} - H, \ \ \forall\, i \\
& \sigma^{1\prec2}= H +\sum_a \overline{W}_a^{1\prec2} \lambda_a^{1\prec2}\\
& \sigma^{2\prec1} = H +\sum_b \overline{W}_b^{2\prec1} \lambda_b^{2\prec1}  \\
& \sum_b \lambda_b^{2\prec1}=\sum_a \lambda_a^{1\prec2}.
\end{aligned}&&
\end{flalign}
By removing the dummy variables we get
\begin{flalign}
\begin{aligned}
    \textbf{given}\ \      &\{p_i,C_i\} \\
    \textbf{minimize}\ \   &\sum_b \lambda^{2\prec1}_b  \\ 
    \textbf{s.t.}\ \ 
& p_i C_i^{\otimes 2}  \leq -H, \ \ \forall\, i \\
& -H \leq  \sum_a \overline{W}_a^{1\prec2} \lambda_a^{1\prec2}  \\
&  -H \leq  \sum_b \overline{W}_b^{2\prec1} \lambda_b^{2\prec1}\\
& \sum_b \lambda_b^{2\prec1}=\sum_a \lambda_a^{1\prec2}.
\end{aligned}&&
\end{flalign}
    As before we define 
$\lambda:=\sum_b\lambda_b^{2\prec1}=\sum_a\lambda^{1\prec2}_a$,
$\overline{W}^{1\prec 2}:=\sum_a \frac{\lambda_a^{1\prec2} \overline{W_a}^{1\prec 2}}{\lambda}$ and
$\overline{W}^{2\prec 1}:=\sum_b \frac{\lambda_b^{2\prec1}  \overline{W_a}^{2\prec 1}}{\lambda}$, and set $-H\mapsto H$
to obtain the simplified problem
\begin{flalign}
\begin{aligned}
    \textbf{given}\ \      &\{p_i,C_i\} \\
    \textbf{minimize}\ \   & \lambda \\ 
    \textbf{s.t.}\ \ 
& p_i C_i^{\otimes 2}  \leq H, \ \ \forall\, i \\
& H  \leq \lambda \overline{W}^{1\prec2} \\   
& H \leq \lambda \overline{W}^{2\prec1}    \\
& \overline{W}^{1\prec2} \in \overline{\mathcal{W}}^{1\prec2}\\
&\overline{W}^{2\prec1} \in \overline{\mathcal{W}}^{2\prec1}.
\end{aligned}&&
\end{flalign}
As previously explained, since the operators $\overline{W}^{1\prec2}$ and $\overline{W}^{2\prec1}$ have a fixed trace, by absorbing the coefficient $\lambda$ this problem can be straightforwardly phrased as an SDP.

\section{Computer-assisted proofs}\label{app::compassis}

In this section we provide a general algorithm that can be used to obtain a rigorous computer-assisted proof from numerical optimization packages which may use floating-point variables. Since floating-point variables use approximations to store real numbers, the constraints required by the optimization problem cannot be satisfied exactly. For instance, let $C_\texttt{float}\in\mathcal{L}(\H^I\otimes\H^O)$ be a matrix with floating-point variables which is certified by a computer to respect the quantum channel constraints, i.e.,
\begin{align}
     C_\texttt{float} &\geq0 \\
    _O C_\texttt{float} &= _{IO}C_\texttt{float} \label{eq:proj}\\
    \tr(C_\texttt{float})&=d_I.
\end{align}
Due to floating-point rounding errors, these constraints may be violated in a rigorous analysis, that is, they are satisfied only up to a numerical precision. For this reason, numerical solutions involving floating-point variables or rounding approximations may lead to accuracy problems \cite{floating_wikipedia,floating_guide}.
In order to circumvent the floating-point accuracy issue, we provide an algorithm that, given a floating-point variable matrix which satisfies the constraints of a desired set, up to some numerical precision, we construct another matrix which does not make use of floating-point and satisfies the constraints of the desired set exactly. Here, by desired set we refer to six main sets consider in this work: parallel processes, sequential processes, general processes, and their dual affine spaces.

Before proceeding, we present a useful characterization of the aforementioned sets in a unified manner in terms of projections. More precisely, all these sets can be written as: $C\in \mathcal{L}(\mathcal{H})$ belongs to the desired set $\mathcal{C}\subseteq \mathcal{L}(\mathcal{H}) $ if and only if\footnote{Note that when dual affine spaces are considered, the positivity constraints $C\geq0$ is not required.}
\begin{align} 
    C &\geq0 \label{eq:1}\\
    C &= \map{P}(C) \label{eq:2}\\
    \tr(C)&=\gamma, \label{eq:3}
\end{align}
for a suitable linear space $\H$, for some linear projection map $\map{P}:\mathcal{\H}\to\mathcal{\H}$, i.e., some map $\map{P}$ such that $\map{P}\circ\map{P}=\map{P}$ and $\map{P}(\id)=\id$, and for some normalization coefficient $\gamma$. Here, the set $\mathcal{C}$ is phrased in such a general way that it covers, for example, the set of quantum states, channels, combs, and processes, among others.

For instance, if the desired set $\mathcal{C}$ is the set of quantum channels, we have that  $\H = \H_{I}\otimes \H_{O}$ and $C\in\mathcal{C}$ if and only if
\begin{align} 
    C &\geq0 \\
    C &= \map{P}(C)=  C - _O C + _{IO}C \\ 
    \tr(C)&= \gamma = d_I.
\end{align}

If the desired set is  the set  of two-slot parallel processes $\mathcal{W}^\text{PAR}$, we have that $\H = \H_{I_1}\otimes \H_{O_1}\otimes \H_{I_2}\otimes \H_{O_2}$ and $W\in \mathcal{W}^\text{PAR}$ if and only if
\begin{align} 
    W &\geq0 \\
    W &= \map{P}^\text{PAR}(W)= _{O_1O_2}W \\
    \tr(W)&=\gamma^\text{PAR} = d_{O_1}d_{O_2}.
\end{align}
The projection maps $\map{P}^\Scal$ for the sets of processes $\Wcal^\Scal$ and $\map{\overline{P}}^\Scal$ for the sets of dual affine spaces $\overline{\Wcal}^\Scal$ used in this section are presented in Table.~\ref{tb::projections}.
    
\begin{table}[h!]
\begin{center}
{\renewcommand{\arraystretch}{2}
\begin{tabular}{| c | l | l |}               
	\multicolumn{1}{c}{} & \multicolumn{1}{c}{Processes} & \multicolumn{1}{c}{Dual affine space (Channels)}\\ 
    \hline                 
	PARALLEL   & $\ $ $\widetilde{P}^\text{PAR}(W) = _{O_1O_2}W$ & $\ $ $\widetilde{\overline{P}}^\text{PAR}(\overline{W}) = \overline{W} -_{O_1O_2}\overline{W} + _{I_1I_2O_1O_2}\overline{W}$ \\	
	SEQUENTIAL & $\ $ $\widetilde{P}^\text{SEQ}(W) = _{O_2}W-_{I_2O_2}W+_{O_1I_2O_2}W$ & $\ $ $\widetilde{\overline{P}}^\text{SEQ}(\overline{W}) = \overline{W}-_{O_2}\overline{W}+_{I_2O_2}\overline{W}-_{O_1I_2O_2}\overline{W} + _{I_1O_1I_2O_2}\overline{W}$ $\, $ \\
	GENERAL    & \begin{tabular}{l} $\ $ $\widetilde{P}^\text{GEN}(W) = _{I_1O_1O_2}W - _{I_1O_1}W + _{O_1I_2 O_2}W$ \\ \hspace*{1.5cm} $- _{I_2O_2}W + _{O_1}W + _{O_2}W - _{O_1O_2}W $ \end{tabular} & \begin{tabular}{l} $\ $ $\widetilde{\overline{P}}^\text{GEN}(\overline{W}) = \overline{W} - _{O_1}\overline{W} + _{I_1O_1}\overline{W} - _{O_2}\overline{W} + _{I_2O_2}\overline{W} $ \\ \hspace*{1.5cm} $-  _{O_1 I_2 O_2}\overline{W} -  _{I_1 O_1 O_2}\overline{W} + _{O_1O_2}\overline{W} + _{I_1O_1I_2 O_2}\overline{W}$\end{tabular} \\
    \hline 
\end{tabular}
}
\end{center}
\caption{Projectors onto the linear space spanned by  parallel, sequential, and general processes and their respective dual affine spaces. In all these cases, $\H = \H_{I_1}\otimes \H_{O_1}\otimes \H_{I_2}\otimes \H_{O_2}$. In addition, we remark that the trace constraint of Eq.~\eqref{eq:3} for processes is $\tr(W)=d_{O_1}d_{O_2}$ and for their dual affine spaces, we have $\tr(\overline{W})=d_{I_1}d_{I_2}$.}
\label{tb::projections}
\end{table}

We now present Algorithm 1, which takes a linear operator $C_\texttt{float}$ respecting the conditions of a set $\mathcal{C}$ described by Eqs.~\eqref{eq:1}-\eqref{eq:3} up to numerical precision and provide an operator $C_\texttt{OK}$ which respects the conditions of $\mathcal{C}$ exactly. Also, all the steps of our algorithm can be done without approximations or the use of numerical floating-point variables.
\\

\paragraph*{\textbf{\emph{Algorithm 1:}}}
\begin{enumerate}
    \item 
    \texttt{Construct the non-floating-point matrix 
    $C_{\text{frac}}$ 
    by truncating the matrix
    $C_\text{float}$  } \\ 
    This allows us to work with fractions and to avoid numerical imprecision.
    \item
    \texttt{Define the matrix} $\displaystyle{C:=\frac{C_\texttt{frac}+\left(C_\texttt{frac}\right)^\dagger}{2}}$
    \texttt{to obtain a self-adjoint matrix $C$} \\ 
    Ensures that we are dealing with self-adjoint matrices
    \item \texttt{Project $C$ into a valid subspace and obtain $\map{P}(C)$} \\ Ensures that the operator is in the valid linear subspace.
    \item  \texttt{Find a coefficient $\eta$ such that $\map{D}_\eta\left(\map{P}(C)\right):=\eta \map{P}(C) + (1-\eta)\id$ is positive semidefinite} \\
    Ensures positivity without leaving the valid subspace.
    \item  \texttt{Output the operator $\displaystyle{C_\texttt{OK}= \gamma \frac{\map{D}_\eta(\map{P}(C))}{ \tr[\map{D}_\eta(\map{P}(C))] }}$
    which lies in $\mathcal{C}$}\\
    Ensures the trace condition, preserving positivity and without leaving the valid subspace.
\end{enumerate}

One way to complete step 4 is to start with $\eta=1$ and check if the operator $C$ is already positive semidefinite. If $C$ is not positive semidefinite, we can slowly decrease the value of $\eta$ and check if $\map{D}_\eta(C)$ is positive definite. Checking if a matrix is positive semidefinite can be done efficiently by implementing the Cholesky decomposition algorithm and checking whether the algorithm leads to a valid Cholesky decomposition. 
  
One can verify that the operator $C_\texttt{OK}$ provided by the algorithm described above necessarily belongs to the desired valid set $\mathcal{S}$ with the aid of the following theorem.

\begin{theorem}
    Let $\map{P}:\mathcal{L}(\H)\to \mathcal{L}(\H)$ be a linear projector i.e., $\map{P}\circ\map{P}=\map{P}$, which respects $\map{P}(\id)=\id$. Let $\map{D}_\eta:\mathcal{L}(\H)\to \mathcal{L}(\H)$ be an affine map defined by $\map{D}_\eta(C):=\eta C + (1-\eta) \id$. It holds that
    
    \begin{equation}
        \map{D}_\eta\left(\map{P}(C)\right) =
        \map{P}\left(\map{D}_\eta\left(\map{P}(C)\right) \right)
    \end{equation}
\end{theorem}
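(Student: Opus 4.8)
The plan is to use only three structural facts about the maps involved: that $\map{P}$ is linear, that it is idempotent ($\map{P}\circ\map{P}=\map{P}$), and that it is unital ($\map{P}(\id)=\id$). The affine map $\map{D}_\eta$ contributes nothing beyond being an affine combination of its argument with $\id$, so the claimed identity will drop out of a direct computation once these three facts are in place.

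First I would abbreviate $X:=\map{P}(C)$ and record the single consequence of idempotency that is needed, namely that $X$ is a fixed point of $\map{P}$:
\begin{equation}
\map{P}(X)=\map{P}\bigl(\map{P}(C)\bigr)=\map{P}(C)=X.
\end{equation}
Then I would expand the right-hand side of the claim using linearity of $\map{P}$ and the definition $\map{D}_\eta(X)=\eta X+(1-\eta)\id$:
\begin{equation}
\map{P}\bigl(\map{D}_\eta(X)\bigr)=\map{P}\bigl(\eta X+(1-\eta)\id\bigr)=\eta\,\map{P}(X)+(1-\eta)\,\map{P}(\id)=\eta X+(1-\eta)\id,
\end{equation}
where the last step substitutes $\map{P}(X)=X$ and the unitality hypothesis $\map{P}(\id)=\id$. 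The resulting operator is precisely $\map{D}_\eta(X)=\map{D}_\eta\bigl(\map{P}(C)\bigr)$, which is the left-hand side, so the two sides coincide.

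There is essentially no obstacle here; the proof is a two-line computation. The only point worth stating carefully is that the hypothesis used is idempotency of $\map{P}$ (so that $\map{P}(C)$, not $C$ itself, is the fixed point), which is what makes the identity valid for an arbitrary input $C\in\mathcal{L}(\H)$ rather than only for operators already lying in the range of $\map{P}$. One could additionally remark, for context, that the argument shows more generally that the range of $\map{P}$ is an affine subspace containing $\id$ and hence stable under every $\map{D}_\eta$; this is exactly the invariance property that legitimizes steps 4 and 5 of Algorithm 1, where positivity is restored by mixing with $\id$ and the trace is then rescaled without ever leaving the valid linear subspace cut out by $\map{P}$.
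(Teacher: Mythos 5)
Your proof is correct and follows essentially the same route as the paper's: expand $\map{P}\bigl(\map{D}_\eta(\map{P}(C))\bigr)$ by linearity, then use idempotency on $\map{P}(\map{P}(C))$ and unitality on $\map{P}(\id)$ to recover $\map{D}_\eta(\map{P}(C))$. The only cosmetic difference is that you apply $\map{P}(\id)=\id$ directly where the paper first rewrites $\id$ as $\map{P}(\id)$ and then invokes idempotency; the closing remark about the range of $\map{P}$ being an affine subspace containing $\id$ is a nice contextual observation but not part of the paper's argument.
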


\begin{proof}
    \begin{align}
        \map{P}\left(\map{D}_\eta\left(\map{P}(C)\right) \right) 
        &=\map{P}\left(\eta \map{P}(C) + (1-\eta) \map{P}(\id) \right) \\
        &=\eta \map{P}\left(\map{P}(C)\right) + (1-\eta) \map{P}\left(\map{P}(\id) \right) \\
        &=\eta \map{P}(C) + (1-\eta) \id \\
        &= \map{D}_\eta\left( \map{P}(C) \right).
    \end{align}
\end{proof}

Algorithm 1 allows us to obtain upper bounds for the maximal probability of discriminating an ensemble of quantum channels. For the case in which the desired set $\mathcal{C}$ is the set of dual affine spaces of processes $\overline{\Wcal}^\Scal$ for some strategy $\Scal$, $C_\texttt{float}$ is the floating-point matrix of a dual affine $\overline{W}^\Scal_\texttt{float}$, that can be obtained using numerical convex optimization packages to solve the dual problem SDP, and map $\map{P}$ is one of the projection maps $\map{\overline{P}}^\Scal$, then Algorithm 1 will return a matrix $\overline{W}^\Scal_\texttt{OK}$ that satisfies the constraints of the set $\overline{\Wcal}^\Scal$ exactly. A rigorous upper bound on the maximal probability for discriminating the ensemble $\{p_i,C_i\}_i$ is then given by the value $p_\texttt{upper}$ such that $p_iC_i^{\otimes k}\leq p_\texttt{upper} \overline{W_\texttt{OK}}$ for all $i$. Note that if the channels $C_i$ are also represented with floating-point variables, one can also use Algorithm 1 to obtain exact channels $C_{i,\texttt{OK}}$.
    
In order to calculate lower bounds, we can use the primal SDP to obtain a set of $\{T_{i,\texttt{float}}\}_{i=1}^N$ which satisfies the conditions of some desired class of tester up to some numerical precision. To tackle this situation, we present an algorithm to obtain a set of operators $\{T_{i,\texttt{OK}}\}_{i=1}^N$ which satisfies the tester constraints exactly. Note that this algorithm also works for positive-operator valued measures (POVMs), instruments, and super-instruments, among others.
\\

\paragraph*{\textbf{\emph{Algorithm 2:}}}
\begin{enumerate}
    \item 
    \texttt{Construct the non-floating-point matrix 
    $T_{i, \text{frac}}$ 
    by truncating the matrix
    $T_{i, \text{float}}$  } \\ 
    This allows us to work with fractions and to avoid numerical imprecision.
    \item 
    \texttt{Define the matrices} $\displaystyle{T_i:=\frac{T_{i, \texttt{frac}}+\left(T_{i, \texttt{frac}}\right)^\dagger}{2}}$
    \texttt{to obtain self-adjoint matrices $T_i$} \\ 
    Ensures that we are dealing with self-adjoint matrices.
    \item \texttt{Project $W:=\sum_{i=1}^N T_i$ into a valid subspace and obtain $\map{P}(W)$ } \\ Ensures the operator $W$ is in the valid linear subspace.
    \item \texttt{Define the extra-outcome tester element $T_\varnothing:=\map{P}(W)-W$ } \\ Useful step to later ensure the normalization constraints.
    \item  \texttt{Find a coefficient $\eta$ such  $\map{D}_\eta(T_\varnothing)\geq0$
    and 
    $\map{D}_\eta(T_i)
    \geq 0$ holds for every $i$} \\
    Ensures positivity of all tester elements.
    \item  \texttt{Define $W_\eta:=\left(\sum_{i=1}^N \map{D}_\eta(T_i)\right) + \map{D}_\eta(T_\varnothing)$} \\
    Defines a positive semidefinite operator such that $
    W_\eta = \map{P}(W_\eta)$.
    \item  
    \texttt{Output the set 
    $\displaystyle{T_\text{OK}:=
    \left\{
    \gamma \frac{  \map{D}_\eta(T_i)  +  \frac{\map{D}_\eta(T_\varnothing)}{N} }  { \tr(W_\eta)}
    \right\}_i}$
    which is a valid tester} \\
    Equally distributes the tester element $\map{D}_\eta(T_\varnothing)$ between elements indexed by $i$.
\end{enumerate}

Similarly to algorithm 1, one can verify that the set $T_\texttt{OK}$ is a valid tester.
\begin{theorem}
    The operator $T_\texttt{OK}$ defined in step 6 of Algorithm 2 is a valid tester.
\end{theorem}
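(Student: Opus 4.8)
\emph{Proof proposal.} We must verify that $T_\texttt{OK}=\{\,\gamma\,[\map{D}_\eta(T_i)+\map{D}_\eta(T_\varnothing)/N]/\tr(W_\eta)\,\}_i$ satisfies the two defining requirements of a tester with strategy $\Scal$: (i) every element is positive semidefinite, and (ii) the sum $W_\texttt{OK}:=\sum_i [T_\texttt{OK}]_i$ is a valid process, i.e.\ it is positive semidefinite, is fixed by the projector $\map{P}=\map{P}^\Scal$, and has the prescribed trace $\gamma=\gamma^\Scal$. The plan is to reduce everything to a single operator identity for $W_\eta$ and then read off (i)--(ii) directly, keeping all manipulations over the rationals so that no floating-point arithmetic is reintroduced.

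The core step is to expand $W_\eta$. Writing $W=\sum_i T_i$, using $\map{D}_\eta(A)=\eta A+(1-\eta)\id$, and substituting $T_\varnothing=\map{P}(W)-W$, one obtains
\begin{equation}
W_\eta=\sum_{i=1}^N\map{D}_\eta(T_i)+\map{D}_\eta(T_\varnothing)=\eta\,\map{P}(W)+(N+1)(1-\eta)\,\id .
\end{equation}
Since $\map{P}$ is linear, idempotent and unital ($\map{P}(\id)=\id$), this immediately gives $\map{P}(W_\eta)=W_\eta$; alternatively, $\map{D}_\eta(\map{P}(W))$ is $\map{P}$-fixed by the preceding theorem and $W_\eta$ differs from it only by the $\map{P}$-fixed operator $N(1-\eta)\id$. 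Now $W_\texttt{OK}=\gamma\,W_\eta/\tr(W_\eta)$, because the numerators of the $[T_\texttt{OK}]_i$ add up to $\sum_i\map{D}_\eta(T_i)+\map{D}_\eta(T_\varnothing)=W_\eta$. Linearity of $\map{P}$ then yields $\map{P}(W_\texttt{OK})=W_\texttt{OK}$; the trace is $\tr(W_\texttt{OK})=\gamma$ by construction; and $W_\texttt{OK}\geq0$ since it is a positive multiple of a sum of positive semidefinite operators $\map{D}_\eta(T_i)$ and $\map{D}_\eta(T_\varnothing)$ — positivity of which is exactly what step~5 of the algorithm enforces. Property (i) follows from the same fact: each numerator $\map{D}_\eta(T_i)+\map{D}_\eta(T_\varnothing)/N$ is a nonnegative combination of positive semidefinite operators, and $\gamma>0$, $\tr(W_\eta)>0$.

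The only delicate point — and the main obstacle — is showing $\tr(W_\eta)>0$, so that the final normalization is well defined. Here one uses that every projector in Table~\ref{tb::projections} is trace-preserving (its defining coefficients sum to $1$), hence $\tr(T_\varnothing)=\tr(\map{P}(W))-\tr(W)=0$ and $\tr(\map{D}_\eta(T_\varnothing))=(1-\eta)\,\dim\H$. Combined with $\tr(\map{D}_\eta(T_i))\geq0$ from step~5, this gives $\tr(W_\eta)\geq(1-\eta)\dim\H$, which is strictly positive whenever $\eta<1$; the boundary case $\eta=1$ occurs only when the $T_i$ and $T_\varnothing$ are already positive semidefinite, and then $\tr(W_\eta)=\tr(W)>0$ unless the numerical ansatz is identically zero, which we exclude. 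Conceptually, the reason the extra outcome $T_\varnothing$ is indispensable is that applying $\map{D}_\eta$ to each $T_i$ separately adds $N(1-\eta)\id$ in total — harmless, since $\id$ is in the subspace — but simultaneously shrinks $W$ to $\eta W$, which leaves the subspace; adding $\map{D}_\eta(T_\varnothing)$ contributes precisely the missing $\eta\big(\map{P}(W)-W\big)$ together with one more unit of $\id$, restoring membership. Beyond the positivity bookkeeping just described, the argument is a finite sequence of linear-algebraic identities, so it can be carried out exactly.
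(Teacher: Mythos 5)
Your proof is correct and follows essentially the same route as the paper's: expand $W_\eta=\sum_i\map{D}_\eta(T_i)+\map{D}_\eta(T_\varnothing)=\eta\,\map{P}(W)+(N+1)(1-\eta)\,\id$, use linearity, idempotence, and unitality of $\map{P}$ to conclude $\map{P}(W_\eta)=W_\eta$, and read off positivity and the trace normalization of $W_\texttt{OK}=\gamma\,W_\eta/\tr(W_\eta)$. Your extra verification that $\tr(W_\eta)>0$ (via trace preservation of the projectors, hence $\tr(T_\varnothing)=0$), which makes the final normalization well defined, is a detail the paper leaves implicit but does not change the approach.
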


\begin{proof}
By construction all tester elements 
\begin{equation}
    T_{i\texttt{OK}}:=\gamma \frac{  \map{D}_\eta(T_i)  +  \frac{\map{D}_\eta(T_\varnothing)}{N} }  { \tr(W_\eta)}
\end{equation} 
are positive semidefinite, we then need to show that $W_\texttt{OK}:=\sum_i T_{i, \texttt{OK}}$ respects $\map{P}(W_\texttt{OK})= W_\texttt{OK}$ and $\tr(W_\texttt{OK})=\gamma$. For that, note that
\begin{align}
    W_\texttt{OK} =& \gamma 
\sum_{i=1}^N \frac{  \map{D}_\eta(T_i)  +  \frac{\map{D}_\eta(T_\varnothing)}{N} }  { \tr(W_\eta)}
 \\
=& \gamma \frac{W_\eta}{\tr(W_\eta)}.
\end{align}
We can then guarantee that $\tr(W_\texttt{OK})=\gamma$ and
\begin{align}
    W_\eta &=\left[\sum_{i=1}^N\eta T_i + (1-\eta)\id\right]
    + \eta T_\varnothing + (1-\eta) \id \\
    &= \eta W + (1+\eta)N \id + \eta \map{P}(W)-\eta W + (1-\eta) \id \\
    &= \eta\map{P}(W) + (1-\eta)(N+1)\id \\
    &= \eta\map{P}(W) + (1-\eta)(N+1)\map{P}(\id) \\
    &= \map{P}(W_\eta) .
\end{align}
\end{proof}

Algorithm 2 allows us to obtain lower bounds for the maximal probability of discriminating an ensemble of quantum channels. A floating-point set of matrices $\{T^\Scal_{i, \texttt{float}}\}_i$, can be obtained via numerical convex optimization packages to solve the primal problem SDP. Then Algorithm 2 will return a set of matrices $T_\texttt{OK}^\Scal$  that satisfies the constraints of the set $\mathcal{T}^\Scal$ exactly. A rigorous lower bound on the maximal probability for discriminating ensemble $\{p_i,C_i\}_i$ is then given by the value $p_\texttt{lower}=\sum_{i=1}^N p_i \tr\left(C_i^{\otimes k} T_{i,\texttt{OK}}^\Scal\right)$.

We have implemented the algorithms presented in this section and the remaining code necessary for the calculation of the upper- and lower bounds presented in this paper. All code has been uploaded to an online repository~\cite{githubMTQ}. The SDP optimization was implemented in MATLAB™ using the package cvx~\cite{cvx} and tested independently with the solvers MOSEK~\cite{mosek}, SeDuMi~\cite{sedumi}, and SDPT3~\cite{sdpt3}. The computer-assisted proof step used to obtain the exact upper and lower bounds was implemented in Mathematica™. All our code can be freely used, edited, and distributed under the MIT license \cite{MIT_license}.

\section{Sampling general channels and the typicality of the hierarchy between discrimination strategies}\label{app::randchannels}

Our method for generating a general channel goes as follows:
\begin{enumerate}
    \item Fix input dimension $d_I$ and output dimension $d_O$.
    \item Uniformly sample a positive semidefinite matrix $A$ of size ($d_Id_O$)-by-($d_Id_O$), according to the Hilbert-Schmidt measure. This can be done, for example, using the function \texttt{RandomDensityMatrix} of the freely distributed MATLAB toolbox QETLAB~\cite{qetlab}.
    \item Define $C$ to be the projection of $A$ on the subspace of valid quantum channels, according to
    \begin{equation}
        C = A - {_O} A + \frac{\id}{d_O}.
    \end{equation}
    \item Check whether $C$ is a positive semidefinite matrix. If not, discard $C$ and repeat the process. If yes, than $C$ represents the Choi operator of a valid quantum channel $\widetilde{C}:\Lcal(\Hcal^I)\to\Lcal(\Hcal^O)$.
\end{enumerate}

We have sampled $100,000$ pairs of general qubit-qubit channels using this method and computed, using our SDP methods, the maximal probability of discriminating these channels in an ensemble where both channels are equally probable, using parallel, sequential, separable, and general strategies. Our results are summarized in Table~\ref{tb::randchannels}. The first column denotes between which strategies a gap was found and the second column denotes how many of the $100,000$ pairs of channels had such gap.

\begin{table}[h!]
\begin{center}
\begin{tabular}{| c | c |}
		\hline                             
		Strategy gap & $\quad$ Number of pairs of channels $\quad$ \\
		& (out of 100\,000) \\
		\hline 
		$P^\text{PAR}<P^\text{SEQ}$ & $99\,955$ \\
		$P^\text{SEQ}<P^\text{SEP}$ & $99\,781$  \\
		$P^\text{SEP}<P^\text{GEN}$ & $94\,026$ \\
		$\quad$ $P^\text{PAR}<P^\text{SEQ}<P^\text{SEP}<P^\text{GEN}$ $\quad$ & $94\,015$ \\
        \hline
\end{tabular}
\end{center}
\caption{The first column denotes between which strategies of channel discrimination a gap in performance was found and the second column denotes how many of the $100,000$ pairs of channels that were sampled demonstrated such a gap.}
\label{tb::randchannels}
\end{table}

In particular, the last line of Table~\ref{tb::randchannels}, which shows that a strict hierarchy  $P^\text{PAR}<P^\text{SEQ}<P^\text{SEP}<P^\text{GEN}$ between all four strategies was found by $94,015$ pairs of channels, implies that our method has around $94\%$ probability of generating a pair of qubit-qubit channels that showcases this phenomenon.

\begin{figure}
\begin{center}
	\includegraphics[width=\columnwidth]{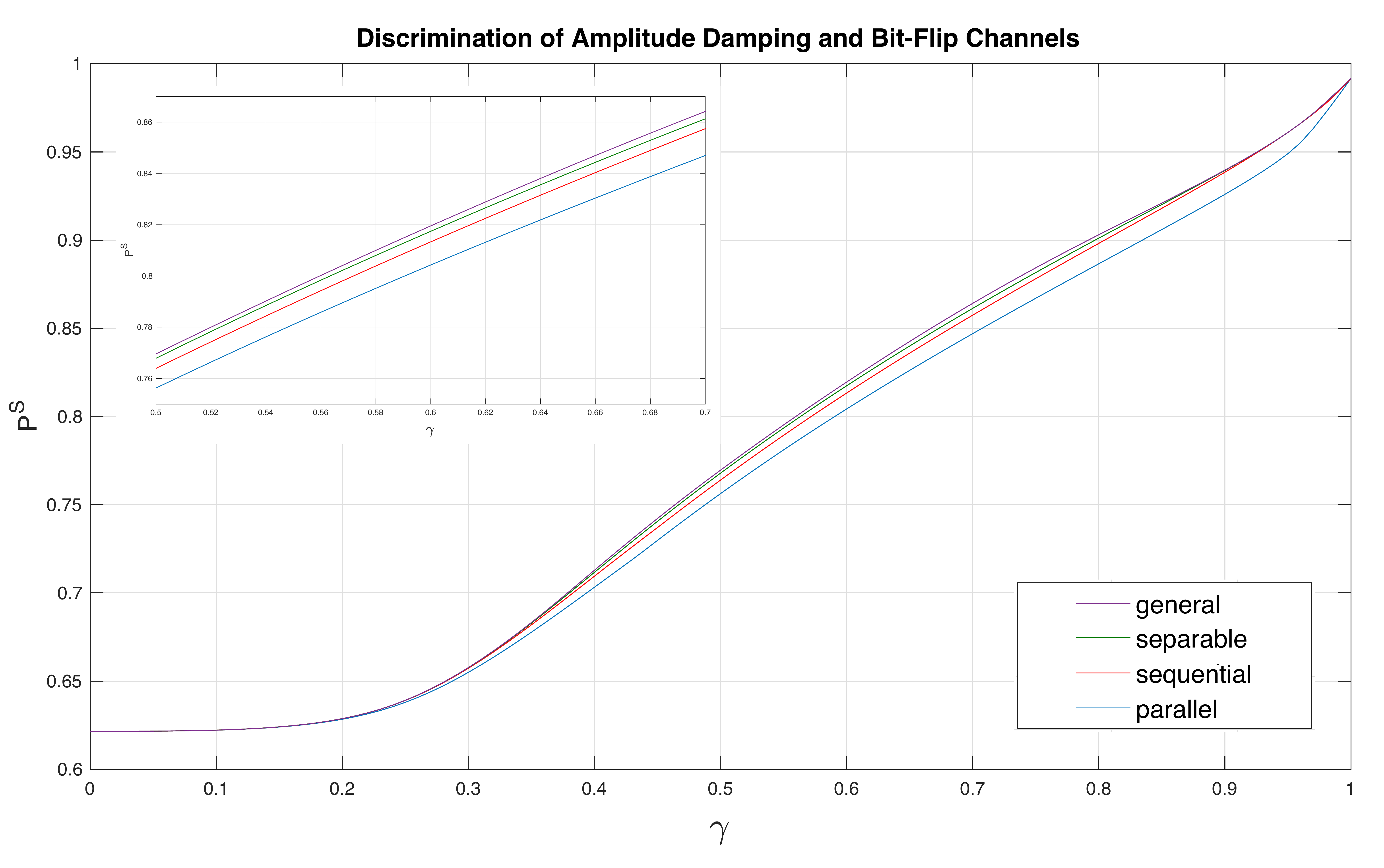}
	\caption{Probability of successfully discriminating an amplitude damping channel and a bit-flip channel, in an equiprobable ensemble, using $k=2$ copies. The value of the decay parameter of the amplitude damping channel varies with the interval $\gamma\in[0,1]$, while the flipping parameter of the bit-flip channel is fixed at $\eta=0.87$. The four curves represent parallel, sequential, separable, and general strategies of channel discrimination. A clear gap between all four strategies is clearly visible in the picture-in-picture plot, with $\gamma\in[0.5,0.7]$.}
\label{fig::ampdamp_bitflip}
\end{center}
\end{figure}

For the case of discriminating between amplitude damping channels and bit-flip channels, in order to show that the phenomenon of the advantage between different strategies is not unique to a specific choice of parameters, we plot on Fig.~\ref{fig::ampdamp_bitflip} the probability of successful discrimination between an amplitude damping channel with decay parameter $\gamma\in[0,1]$ and a bit-flip channel with fixed flipping parameter $\eta=0.87$. A clear gap between all four strategies can be clearly seen on the zoomed picture-in-picture, which plots only $\gamma\in[0.5,0.7]$. Similar plots can be obtained for different values of $\eta$.

It is also true that a strict hierarchy between strategies of channel discrimination can be found when discriminating among two amplitude damping channels, in an equiprobable ensemble, with different decay parameters. Using our methods, we have calculated the probability of success for all four strategies, and would like to point out one interesting case of discrimination between one amplitude damping channel with $\gamma_1=0.37$ and another with $\gamma_2=0.87$, which gives 

\begin{equation}
\begin{alignedat}{3}
  \frac{8101}{10000} < P^\text{PAR} < \frac{8102}{10000} & && && \\
  & < \frac{8161}{10000} < P^\text{SEQ} < \frac{8162}{10000} && && \\
  & && < \frac{8166}{10000} < P^\text{SEP} < \frac{81665}{100000} && \\
  & && && < \frac{8167}{10000} < P^\text{GEN} < \frac{8168}{10000}.
\end{alignedat}
\end{equation}


Here, we confirm that there exists advantage in the discrimination of amplitude damping channels using sequential strategies over parallel strategies.  Furthermore, we show that the case of discrimination among two amplitude damping channels is also an example of a complete hierarchy among all four strategies.

\end{document}